     \def\section{\@startsection{section}{1}%
     \z@{.7\linespacing\@plus\linespacing}{.5\linespacing}%
     {\bfseries
     \centering
     }}
     \def\@secnumfont{\bfseries}
\newcommand*{\cA}{\mathcal{A}}
\newcommand*{\cB}{\mathcal{B}}
\newcommand*{\cC}{\mathcal{C}}
\newcommand*{\C}{\mathbb{C}}
\newcommand*{\N}{\mathbb{N}}
\newcommand*{\1}{\mathbbm{1}}
\newcommand*{\R}{\mathbb{R}}
\newcommand*{\halb}{\frac{1}{2}}
\newcommand{\abs}[1]{\left|#1\right|}
\newcommand{\norm}[1]{\left\|#1\right\|}
\renewcommand{\and}{\text{ and }}
\newtheorem{theorem}{Theorem}[section]
\newtheorem{lemma}[theorem]{Lemma}
\newtheorem{proposition}[theorem]{Proposition}
\newtheorem{corollary}[theorem]{Corollary}
\theoremstyle{definition}
\newtheorem{definition}[theorem]{Definition}
\newtheorem{example}[theorem]{Example}
\theoremstyle{remark}
\newtheorem{remark}[theorem]{Remark}
\numberwithin{equation}{section}
\begin{document}

\title[Charged Particle as Hida Distribution]{The Hamiltonian Path Integrand for the Charged Particle in a Constant Magnetic field as White Noise Distribution}

\author[Wolfgang Bock]{Wolfgang Bock}
\address{Functional Analysis and Stochastic Analysis Group, \\
Department of Mathematics, \\
University of Kaiserslautern, 67653 Kaiserslautern, Germany}
\email{bock@mathematik.uni-kl.de}
\urladdr{http://www.mathematik.uni-kl.de/$\sim$bock}

\author[Martin Grothaus]{Martin Grothaus }
\address{Functional Analysis and Stochastic Analysis Group, \\
Department of Mathematics, \\
University of Kaiserslautern, 67653 Kaiserslautern, Germany}
\email{grothaus@mathematik.uni-kl.de}
\urladdr{http://www.mathematik.uni-kl.de/$\sim$grothaus}

\subjclass[2010] {Primary 60H40; Secondary 81Q30}

\keywords{White Noise Analysis, Feynman integrals, Mathematical Physics}

\begin{abstract}
The concepts of Hamiltonian Feynman integrals in white noise analysis are used to realize as the first velocity dependent potential the Hamiltonian Feynman integrand for a charged particle in a constant magnetic field in coordinate space as a Hida distribution. For this purpose the velocity dependent potential gives rise to a generalized Gauss kernel. Besides the propagators also the generating functionals are obtained. 
\end{abstract}

\maketitle

\section{Introduction}
The Feynman path integral is a very successfully applied object. Although the first aim of Feynman was to develop path integrals based on a Lagrangian, they also can be used for various systems which have a law of least action, see e.g.\cite{F48}.\\
Since classical quantum mechanics is based on a Hamiltonian formulation rather than a Lagrangian one, it is worthwhile to take a closer look to the so-called Hamiltonian path integral, which means the Feynman integral in phase space.
Feynman gave a heuristic formulation of the phase space Feynman Integral
\begin{equation}\label{psfey}
K(t,y|0,y_0)= {\rm N} \int_{x(0)=y_0, x(t)=y} \int \exp\left(\frac{i}{\hbar} S(x)\right) \prod_{0<\tau<t}  \frac{dp(\tau)}{(2\pi)^d} dx(\tau)
\end{equation}
in \cite{Fe51}.
Here the action (and hence the dynamic) is expressed by a canonical (Hamiltonian) system of generalized space variables $x$ and their corresponding conjugate momenta $p$. The canonical variables can be found by a Legendre-transformation, see e.g.~\cite{Sch07}.
The Hamiltonian action:
\begin{equation*}
S(x,p,t)=\int_0^t p(\tau)\dot{x}(\tau) -H(x(\tau),p(\tau),\tau) d\tau,
\end{equation*}
where 
\begin{equation*}
H(x,p,t)=\frac{1}{2m}p^2 +V(x,p,t)
\end{equation*}
is the Hamilton function and given by the sum of the kinetic energy and the potential.
Note that both integrals, the Feynman integral as well as the Hamiltonian path integral are thought to be integrals w.r.t~a flat, i.e.~translation invariant measure on the infinite dimensional path space. Such a measure does not exist, hence the integral in \ref{psfey} is not a mathematical rigorous object. Nevertheless there is no doubt that it has a physical meaning.\\
The Hamiltonian setting has many advantages such as e.g.~:
\begin{itemize}
\item the semi-classical limit of quantum mechanics is more natural in an Hamiltonian setting, i.e. the phase space is more natural in classical mechanics than the configuration space, see also \cite{AHKM08, KD82} and the references therein.
\item in \cite{DK83} the authors state, that potentials which are time-dependent or velocity dependent should be treated with the Hamiltonian path integral. 
\item momentum space propagators can be investigated.
\end{itemize}
There are many attempts to give a meaning to the Hamiltonian path integral as a mathematical rigorous object. Among these are analytic continuation of probabilistic integrals via coherent states \cite{KD82, KD84} and infinite dimensional distributions e.g.~\cite{DMN77}. Another approach by Albeverio et al.~uses Fresnel integrals e.g.~\cite{AHKM08, AGM02} and most recently a method using time-slicing was developed by Naoto Kumano-Go \cite{Ku11}. 
As a guide to the literature on many attempts to formulate these ideas we point out the list in \cite{AHKM08}.\\
Here we choose a white noise approach. White noise analysis is a mathematical framework which offers generalizations of concepts from finite-dimensional analysis, like differential operators and Fourier transform to an infinite-dimensional setting. We give a brief introduction to white noise analysis in Section 2, for more details see \cite{Hi80,BK95,HKPS93,Ob94,Kuo96}. Of special importance in white noise analysis are spaces of generalized functions and their characterizations. In this article we choose the space of Hida distributions, see Section 2.\\
The idea of realizing Feynman integrals within the white noise framework goes back to \cite{HS83}. As ansatz for the Feynman integrand in configuration space one has
\begin{multline}\label{integrandpot}
I_V = {\rm Nexp}\left( \frac{i}{2\hbar}\int_0^t \dot{x}^2(\tau)d\tau +\frac{1}{2}\int_0^t \dot{x}(\tau)^2 d\tau\right)\\
\times \exp\left(-\frac{i}{\hbar} \int_0^t V(x(\tau),\dot{x}(\tau)) \, d\tau\right) \cdot \delta(x(t)-y),
\end{multline}
where $x(t)=y_0+ B_t$ is a Brownian path starting in $y_0$. In equation \eqref{integrandpot} the first integral in the exponential represents the kinetic energy and the second integral the compensation of the Gaussian fall-off. The delta function (Donsker's Delta) pins the paths at the end time point in the end point. The normalized exponential as well as the delta function exists as well-defined objects in the space of Hida distributions.
With this concept many authors constructed the Feynman integrand for a large class of Lagrange functions and thus the Feynman integral as generalized expectation of the integrand w.r.t.~ the Gaussian measure see e.~g.~\cite{BCB02}, \cite{FPS91}, \cite{FOS05},
\cite{GKSS97}, \cite{HKPS93}, \cite{KS92}, \cite{KWS97},
\cite{L06}, \cite{SS04}, \cite{BGJ13}.\\
In \cite{BG11} the concepts from \cite{GS98a} are used to give a mathematical rigorous realization to the Hamiltonian path integrand as a Hida distribution in the case of non-velocity dependent potentials. There also the canonical commutation relations could be shown in the sense of \cite{FeHi65}. To obtain this it is used that the white noise analysis is not only giving a meaning to the integral as a generalized expectation of the integrand but provides also the generating functional of the Green's function to the Schrödinger equation
\\
In this article we apply the methods from \cite{BG11} to the physical system of a charged particle in a constant magnetic field and thus a velocity dependent potential. Investigations of this system in configuration space have been done in \cite{G96}, \cite{J10} and \cite{BGJ13}, also using the white noise approach. The motion we consider in the plane orthogonal to the direction of the magnetic field. 
For the corresponding Hamiltonian action one finds
\begin{equation*}
S(q,p,T) = \int_0^{T} {\bf p} \dot{\vec{{\bf x}}} - \frac{1}{2m} \bigg({\bf p} -\frac{q}{c} {\bf A}({\bf x})  \bigg)^2 \; d\tau,
\end{equation*}
where $${\bf p}=\left( \begin{array}{l} p_1\\p_2\\p_3\end{array}\right)\text{ and }{\bf x}=\left( \begin{array}{l} x_1\\x_2\\x_3\end{array}\right),$$
respectively, see e.g.~\cite[form.(2.49), p.103]{Sch07}. Moreover $q$ is the charge of the particle, $c$ is the speed of light and ${\bf A}$ is a vector potential. Note that a multiplication of the vectors above is thought of a the euclidean scalar product, i.e. 
$${\bf p} {\bf x} = p_1 x_1 + p_2 x_2 +p_3 x_3.$$
Here we consider the case of a constant magnetic field along the $x_3$-axis, i.e.~the axis orthogonal to the plane spanned by $x_1$ and $x_2$. We have ${\bf B}=(0,0,B_z)$.
With the relation 
$${\bf B}= \mathrm{rot}({\bf A}),$$
we have
$$A= B_z \left(\begin{array}{c} -x_2\\x_1\\0 \end{array}\right).$$
Thus we obtain
\begin{multline} \label{cpaction}
S({\bf x},{\bf p},t) = \int_0^{t} {\bf p} \dot{{\bf x}} - \frac{1}{2m} \bigg({\bf p} -\frac{q}{c} B_z \left(\begin{array}{c} -x_2\\x_1\\0 \end{array}\right)  \bigg)^2 \; d\tau\\ 
= \int_0^{t} {\bf p} \dot{{\bf x}} - \frac{1}{2m}(p_1^2+p_2^2+p_3^2) +\frac{q}{mc} B_z (x_1 p_2 -x_2 p_1) -\frac{q^2 B_z^2}{2mc^2}(x_1^2+x_2^2) d\tau.
\end{multline}
Note that the free motion along the $x_3$-axis separates independently from the motion in the plane. Thus in the following we consider the planar motion. 
Then we propose the following formal ansatz for the Feynman integrand in Phase space for a charged particle in a constant magnetic field staring in $(0,0)$ at time $0$ and endpoint $(y_1,y_2)$ at time $t$:
\begin{multline}\label{anpsfeycp}
I_{CP} =\\
{\rm Nexp}\left( \frac{i}{\hbar}\int_0^{t} {\bf p} \dot{{{\bf x}}} - \frac{1}{2m}(p_1^2+p_2^2) d\tau +\frac{1}{2}\int_{t_0}^t \dot{x_1}(\tau)^2 +\dot{x_2}(\tau)^2  +p_1(\tau)^2+ p_2(\tau)^2 d\tau\right)\\ 
\times \exp\left(-\frac{i}{h} \int_{t_0}^t \frac{1}{2m}(p_1^2+p_2^2) -\frac{q}{mc} B_z (x_1 p_2 -x_2 p_1) -\frac{q^2 B_z^2}{2mc^2}(x_1^2+x_2^2) d\tau\right)\\  \times \delta(x_1(t)-y_1)\delta(x_2(t)-y_2).
\end{multline}
In this expression the sum of the first and the third integral is the action $S(x,p)$, and the Donsker's delta function serves to pin trajectories to $y$ at time $t$. The second integral is introduced to simulate the Lebesgue integral by compensation of the fall-off of the Gaussian measure in the time interval $({t_0},t)$. Furthermore, as in Feynman's formula we need a normalization which turns out to be infinity and will be implemented by the use of a normalized exponential as in Chapter \ref{GGK}. Note that it is a priori not clear which terms one has to normalize to obtain the right physics, i.e.~ the propagator and the commutation relations. We used in this article the same normalization procedure as in \cite{BG11}, i.e.~ we normalized the kinetic energy and the term which simulates the flat measure.\\

These are the core results of this article:
\begin{itemize}
\item The concepts of generalized Gauss kernels from \cite{GS98a} and \cite{BG11} are used to construct the Feynman integrand for a charged particle in a constant magnetic field $I_{CP}$ as in \ref{anpsfeycp} as a Hida distribution, see Theorem \ref{magnetictheorem}.
\item The results in Theorem \ref{magnetictheorem} provide us with the generating functional for a charged particle in a constant magnetic field. 
\item The generalized expectations (generating functional at zero) yields the Green's functions to the corresponding Schrödinger equation.
\end{itemize}

\section{White Noise Analysis}
\subsection{Gel'fand Triples}
Starting point is the Gel'fand triple $S_d(\R) \subset L^2_d(\R) \subset S'_d(\R)$ of the $\R^d$-valued, $d \in \N$, Schwartz test functions and tempered distributions with the Hilbert space of (equivalence classes of) $\R^d$-valued square integrable functions w.r.t.~the Lebesgue measure as central space (equipped with its canonical inner product $(\cdot, \cdot)$ and norm $\|\cdot\|$), see e.g.~ \cite[Exam.~11]{W95}.
Since $S_d(\R)$ is a nuclear space, represented as projective limit of a decreasing chain of Hilbert spaces $(H_p)_{p\in \N}$, see e.g.~\cite[Chap.~2]{RS75a} and \cite{GV68}, i.e.~
\begin{equation*}
S_d(\R) = \bigcap_{p \in \N} H_p,
\end{equation*}
we have that $S_d(\R)$ is a countably Hilbert space in the sense of Gel'fand and Vilenkin \cite{GV68}. We denote the inner product and the corresponding norm on $H_p$ by $(\cdot,\cdot)_p$ and $\|\cdot\|_p$, respectively, with the convention $H_0 = L^2_d(\R)$.
Let $H_{-p}$ be the dual space of $H_p$ and let $\langle \cdot , \cdot \rangle$ denote the dual pairing on $H_{p} \times H_{-p}$. $H_{p}$ is continuously embedded into $L^2_d(\R)$. By identifying $L_d^2(\R)$ with its dual $L_d^2(\R)'$, via the Riesz isomorphism, we obtain the chain $H_p \subset L_d^2(\R) \subset H_{-p}$.
Note that $\displaystyle S'_d(\R)= \bigcup_{p\in \N} H_{-p}$, i.e.~$S'_d(\R)$ is the inductive limit of the increasing chain of Hilbert spaces $(H_{-p})_{p\in \N}$, see  e.g.~\cite{GV68}.
We denote the dual pairing of $S_d(\R)$ and $S'_d(\R)$ also by $\langle \cdot , \cdot \rangle$. Note that its restriction on $S_d(\R) \times L_d^2(\R)$ is given by $(\cdot, \cdot )$.
We also use the complexifications of these spaces denoted with the sub-index $\C$ (as well as their inner products and norms). The dual pairing we extend in a bilinear way. Hence we have the relation 
\begin{equation*}
\langle g,f \rangle = (\mathbf{g},\overline{\mathbf{f}}), \quad \mathbf{f},\mathbf{g} \in L_d^2(\R)_{\C},
\end{equation*}
where the overline denotes the complex conjugation.
\subsection{White Noise Spaces}
We consider on $S_d' (\R)$ the $\sigma$-algebra $\cC_{\sigma}(S_d' (\R))$ generated by the cylinder sets $\{ \omega \in S_d' (\R) | \langle \xi_1, \omega \rangle \in F_1, \dots ,\langle \xi_n, \omega \rangle \in F_n\} $, $\xi_i \in S_d(\R)$, $ F_i \in \cB(\R),\, 1\leq i \leq n,\, n\in \N$, where $\cB(\R)$ denotes the Borel $\sigma$-algebra on $\R$.\\
\noindent The canonical Gaussian measure $\mu$ on $C_{\sigma}(S_d'(\R))$ is given via its characteristic function
\begin{eqnarray*}
\int_{S_d' (\R)} \exp(i \langle {\bf f}, \boldsymbol{\omega} \rangle ) d\mu(\boldsymbol{\omega}) = \exp(- \tfrac{1}{2} \| {\bf f}\|^2 ), \;\;\; {\bf f} \in S_d(\R),
\end{eqnarray*}
\noindent by the theorem of Bochner and Minlos, see e.g.~\cite{Mi63}, \cite[Chap.~2 Theo.~1.~11]{BK95}. The space $(S_d'(\R),\cC_{\sigma}(S_d'(\R)), \mu)$ is the ba\-sic probability space in our setup.
The cen\-tral Gaussian spa\-ces in our frame\-work are the Hil\-bert spaces $(L^2):= L^2(S_d'(\R),$ $\cC_{\sigma}(S_d' (\R)),\mu)$ of complex-valued square in\-te\-grable func\-tions w.r.t.~the Gaussian measure $\mu$.\\
Within this formalism a representation of a d-dimensional Brownian motion is given by 
\begin{equation}\label{BrownianMotion}
{\bf B}_t ({\boldsymbol \omega}) :=(B_t(\omega_1), \dots, B_t(\omega_d)):= ( \langle  \1_{[0,t)},\omega_1 \rangle, \dots  \langle  \1_{[0,t)},\omega_d \rangle),\end{equation}
with ${\boldsymbol \omega}=(\omega_1,\dots, \omega_d) \in S'_d(\R),\quad t \geq 0,$
in the sense of an $(L^2)$-limit. Here $\1_A$ denotes the indicator function of a set $A$. 

\subsection{The Hida triple}

Let us now consider the Hilbert space $(L^2)$ and the corresponding Gel'fand triple
\begin{equation*}
(S) \subset (L^2) \subset (S)'.
\end{equation*}
Here $(S)$ denotes the space of Hida test functions and $(S)'$ the space of Hida distributions. In the following we denote the dual pairing between elements of $(S)$ and $(S)'$ by $\langle \! \langle \cdot , \cdot \rangle \!\rangle$. 
Instead of reproducing the construction of $(S)'$ here we give its characterization in terms of the $T$-transform.\\
\begin{definition}
We define the $T$-transform of $\Phi \in (S)'$ by
\begin{equation*}
T\Phi({\bf f}) := \langle\!\langle  \exp(i \langle {\bf f}, \cdot \rangle),\Phi \rangle\!\rangle, \quad  {\bf f}:= ({ f_1}, \dots ,{ f_d }) \in S_{d}(\R).
\end{equation*}
\end{definition}

\begin{remark}
\begin{itemize}
\item[(i)] Since $\exp(i \langle {\bf f},\cdot \rangle) \in (S)$ for all ${\bf f} \in S_d(\R)$, the $T$-transform of a Hida distribution is well-defined.
\item[(ii)] For ${\bf f} = 0$ the above expression yields $\langle\!\langle \Phi, 1 \rangle\!\rangle$, therefore $T\Phi(0)$ is called the generalized expectation of $\Phi \in (S)'$.
\end{itemize}
\end{remark}

\noindent In order to characterize the space $(S)'$ by the $T$-transform we need the following definition.

\begin{definition}
A mapping $F:S_{d}(\R) \to \C$ is called a {\emph U-functional} if it satisfies the following conditions:
\begin{itemize}
\item[U1.] For all ${\bf{f, g}} \in S_{d}(\R)$ the mapping $\R \ni \lambda \mapsto F(\lambda {\bf f} +{\bf g} ) \in \C$ has an analytic continuation to $\lambda \in \C$ ({\bf{ray analyticity}}).
\item[U2.] There exist constants $0<C,D<\infty$ and a $p \in \N_0$ such that 
\begin{equation*}
|F(z{\bf f})|\leq C\exp(D|z|^2 \|{\bf f} \|_p^2), 
\end{equation*}
for all $z \in \C$ and ${\bf f} \in S_{d}(\R)$ ({\bf{growth condition}}).
\end{itemize}
\end{definition}

\noindent This is the basis of the following characterization theorem. For the proof we refer to \cite{PS91,Kon80,HKPS93,KLPSW96}.

\begin{theorem}\label{charthm}
A mapping $F:S_{d}(\R) \to \C$ is the $T$-transform of an element in $(S)'$ if and only if it is a U-functional.
\end{theorem}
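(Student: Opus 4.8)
The plan is to prove both implications of the characterization (the Potthoff--Streit theorem), treating the straightforward necessity direction first and then the substantial sufficiency direction. For necessity, suppose $\Phi \in (S)'$. Since $(S)$ is a nuclear countably Hilbert space, continuity of $\Phi$ gives $p \in \N$ and $C > 0$ with $\abs{\langle\!\langle \varphi, \Phi\rangle\!\rangle} \leq C \norm{\varphi}_{(S),p}$ for all $\varphi \in (S)$. First I would establish ray analyticity (U1): for fixed $\mathbf{f}, \mathbf{g}$, the map $\lambda \mapsto \exp(i\langle \lambda\mathbf{f} + \mathbf{g}, \cdot\rangle)$ extends to an entire $(S)$-valued function of $\lambda \in \C$, its power series converging in every $(S)$-norm, so composing with the continuous linear $\Phi$ preserves analyticity. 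For the growth bound (U2) I would invoke the explicit estimate $\norm{\exp(i z\langle \mathbf{f}, \cdot\rangle)}_{(S),p} \leq C' \exp(D\abs{z}^2\norm{\mathbf{f}}_q^2)$ available for suitable $q \geq p$, which combined with the continuity bound on $\Phi$ yields exactly the required form.

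The sufficiency direction is where the real work lies. Given a U-functional $F$, I would reconstruct $\Phi$ from its kernels. By U1 I can differentiate at the origin and, after polarization, obtain symmetric $n$-linear forms
\begin{equation*}
F_n(\mathbf{f}_1, \dots, \mathbf{f}_n) := \frac{1}{n!}\,\partial_{\lambda_1}\cdots\partial_{\lambda_n} F(\lambda_1\mathbf{f}_1 + \cdots + \lambda_n\mathbf{f}_n)\big|_{\lambda = 0}.
\end{equation*}
Applying the Cauchy integral formula to the entire function $\lambda \mapsto F(\lambda \mathbf{f})$ and optimizing the contour radius against U2 produces estimates of the shape $\abs{F_n(\mathbf{f}, \dots, \mathbf{f})} \leq C\,(2eD/n)^{n/2}\norm{\mathbf{f}}_p^n$, which a polarization inequality upgrades to a bound on the full $n$-linear form. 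By the nuclear kernel theorem each $F_n$ is then represented by a distribution in $(H_{-p'})_\C^{\hat{\otimes}n}$ for some $p' \geq p$, with controlled Hilbert--Schmidt-type norms.

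Second, I would assemble $\Phi$ through its chaos decomposition. Recalling the relation between the $T$-transform and the Wick-ordered chaos expansion, namely $T\Phi(\mathbf{f}) = \exp(-\tfrac{1}{2}\norm{\mathbf{f}}^2)\sum_n i^n \langle \Phi_n, \mathbf{f}^{\otimes n}\rangle$, I would read off the candidate kernels $\Phi_n$ from the $F_n$ after absorbing the Gaussian factor (equivalently, by first passing to $G(\mathbf{f}) := \exp(\tfrac{1}{2}\norm{\mathbf{f}}^2)F(\mathbf{f})$, which is again a U-functional) and define $\Phi := \sum_n \langle :\cdot^{\otimes n}:, \Phi_n\rangle$. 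The norm estimates of the previous step, together with the $n!$ weights built into the $(S)'$-norm, guarantee convergence of this series in $(S)'$. Finally I would verify $T\Phi = F$ by computing the $T$-transform term by term and matching Taylor coefficients, using that a U-functional is determined by its values on all rays $\lambda\mathbf{f}$ and hence by its kernels.

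The main obstacle is the sufficiency construction: making rigorous the passage from the scalar growth bound U2 to uniform norm bounds on the multilinear kernels $F_n$, and then selecting a single index $p'$ together with verifying the weighted summability so that $\sum_n \langle :\cdot^{\otimes n}:, \Phi_n\rangle$ genuinely lands in one fixed Hilbert space $H_{-p'}$ of the inductive chain defining $(S)'$. The polarization step and the nuclear kernel theorem must be applied with careful tracking of constants, since it is precisely the interplay between the super-exponential decay $(2eD/n)^{n/2}$ coming from U2 and the factorial weights in the norm that makes the reconstructed series converge.
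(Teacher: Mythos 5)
The paper does not prove Theorem \ref{charthm} at all---it defers to the literature (\cite{PS91,Kon80,HKPS93,KLPSW96})---and your outline reproduces exactly the standard Potthoff--Streit argument given in those references: necessity from the continuity of $\Phi \in (S)'$ combined with norm estimates on the exponential vectors $\exp(i z\langle \mathbf{f},\cdot\rangle)$, and sufficiency via Cauchy estimates on the Taylor coefficients of $\lambda \mapsto F(\lambda\mathbf{f})$, polarization, the nuclear kernel theorem, and resummation of the chaos series in a fixed $H_{-p'}$. Your sketch is sound, and the obstacles you flag (joint analyticity needed to justify the mixed derivatives defining $F_n$, which the references settle by Hartogs' theorem, and the interplay between the $(2eD/n)^{n/2}$ decay and the $n!$ weights) are precisely the technical steps carried out in the cited proofs.
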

Theorem \ref{charthm} enables us to discuss convergence of sequences of Hida distributions by considering the corresponding $T$-transforms, i.e.~ by considering convergence on the level of U-functionals. The following corollary is proved in \cite{PS91,HKPS93,KLPSW96}.

\begin{corollary}\label{seqcor}
Let $(\Phi_n)_{n\in \N}$ denote a sequence in $(S)'$ such that:
\begin{itemize}
\item[(i)] For all ${\bf f} \in S_{d}(\R)$, $((T\Phi_n)({\bf f}))_{n\in \N}$ is a Cauchy sequence in $\C$.
\item[(ii)] There exist constants $0<C,D<\infty$ such that for some $p \in \N_0$ one has 
\begin{equation*}
|(T\Phi_n)(z{\bf f })|\leq C\exp(D|z|^2\|{\bf f}\|_p^2)
\end{equation*}
for all ${\bf f} \in S_{d}(\R),\, z \in \C$, $n \in \N$.
\end{itemize}
Then $(\Phi_n)_{n\in \N}$ converges strongly in $(S)'$ to a unique Hida distribution.
\end{corollary}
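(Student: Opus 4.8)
The plan is to use the characterization Theorem \ref{charthm}: I would first build the limiting object on the level of $U$-functionals, verify the two defining properties, and only then translate the result back into a statement about convergence in $(S)'$. By hypothesis (i), for each ${\bf f}\in S_d(\R)$ the sequence $\big((T\Phi_n)({\bf f})\big)_{n\in\N}$ is Cauchy in the complete space $\C$, so the pointwise limit
\[
F({\bf f}) := \lim_{n\to\infty}(T\Phi_n)({\bf f})
\]
exists. The whole task is then to show that $F$ is a $U$-functional and that the resulting distribution is the strong limit of the $\Phi_n$.

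The growth condition U2 transfers immediately: fixing $z\in\C$ and ${\bf f}\in S_d(\R)$ and passing to the limit $n\to\infty$ in estimate (ii) gives $|F(z{\bf f})|\le C\exp(D|z|^2\|{\bf f}\|_p^2)$ with the same constants $C,D$ and the same index $p$. The ray-analyticity U1 is the main technical point, and I expect it to be the chief obstacle. Fix ${\bf f},{\bf g}\in S_d(\R)$ and set $g_n(\lambda):=(T\Phi_n)(\lambda{\bf f}+{\bf g})$. Since each $\Phi_n\in(S)'$, its $T$-transform is a $U$-functional, so each $g_n$ extends to an entire function on $\C$. The uniform estimate (ii), in its standard extension from complex rays to bounded subsets of the complexified space, shows that $\{g_n\}$ is uniformly bounded on every compact subset of $\C$, hence a normal family. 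By hypothesis (i) we have $g_n(\lambda)\to F(\lambda{\bf f}+{\bf g})$ for every real $\lambda$, i.e.\ on a set with an accumulation point, so Vitali's convergence theorem yields that $g_n$ converges locally uniformly to an entire function agreeing with $\lambda\mapsto F(\lambda{\bf f}+{\bf g})$ on the real axis. This is precisely the required analytic continuation, so $F$ satisfies U1. The crux is exactly this passage from pointwise, real-argument convergence to a genuine analytic continuation, which is only possible because the growth bound (ii) is \emph{uniform} in $n$.

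Having shown that $F$ is a $U$-functional, Theorem \ref{charthm} provides a Hida distribution $\Phi\in(S)'$ with $T\Phi=F$, and uniqueness of $\Phi$ follows from the injectivity of the $T$-transform guaranteed by the same theorem. It remains to upgrade the convergence of $T$-transforms to strong convergence $\Phi_n\to\Phi$ in $(S)'$.

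For the last step I would proceed as follows. Because $T\Phi_n({\bf f})=\langle\!\langle\exp(i\langle{\bf f},\cdot\rangle),\Phi_n\rangle\!\rangle$, the convergence $T\Phi_n\to T\Phi$ says exactly that $\langle\!\langle\varphi,\Phi_n\rangle\!\rangle\to\langle\!\langle\varphi,\Phi\rangle\!\rangle$ for $\varphi$ in the span of the exponentials $\exp(i\langle{\bf f},\cdot\rangle)$, which is a dense subspace of $(S)$. The uniform bound (ii) shows that $\{\Phi_n\}$ lies in an equicontinuous (bounded) subset of $(S)'$; combining equicontinuity with density extends the convergence to all $\varphi\in(S)$, giving weak convergence $\Phi_n\to\Phi$. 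Finally, since $(S)$ is a nuclear Fr\'echet, hence Montel, space, its strong dual $(S)'$ is Montel as well, so a weakly convergent sequence automatically converges in the strong topology. This yields the asserted strong convergence and completes the proof.
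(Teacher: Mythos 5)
Your proposal is correct, but note that the paper itself does not prove this corollary: it is quoted from the literature (\cite{PS91,HKPS93,KLPSW96}), so the comparison is with the standard proof given there. Your construction of the limiting $U$-functional coincides with that proof: pointwise limit from (i), transfer of U2 by letting $n\to\infty$ in (ii), and U1 via a Vitali/normal-family argument. As you say, the latter hinges on the $n$-uniformity of (ii), and more precisely on the standard lemma that a growth bound along complex rays extends, with universally adjusted constants, to a locally uniform bound on the complexification $S_d(\R)_{\C}$; since the constants in (ii) do not depend on $n$, neither does the extended bound, and the normal-family argument goes through. Where you genuinely diverge is the final upgrade to strong convergence. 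The cited proofs are quantitative: the proof of the characterization theorem shows that a bound of the form (ii) with fixed $p$, $C$, $D$ places all the $\Phi_n$ (and the limit $\Phi$) in a common ball of a fixed Hilbert space $H_{-q}\subset (S)'$; convergence on the total set of exponentials plus this uniform norm bound gives weak convergence in $H_{-q}$, and the Hilbert--Schmidt (hence compact) embedding $H_{-q}\hookrightarrow H_{-q'}$ then yields norm convergence in $H_{-q'}$, a fortiori strong convergence in $(S)'$. You instead argue abstractly: equicontinuity plus convergence on the dense span of exponentials gives weak convergence, and the Montel property of $(S)'$ (strong dual of a nuclear Fr\'echet space) turns weak sequential convergence into strong convergence. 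This is a valid and arguably cleaner alternative; its only cost is that your assertion that (ii) forces $\{\Phi_n\}$ to be a bounded (equicontinuous) subset of $(S)'$ is not a consequence of the qualitative statement of Theorem \ref{charthm} as quoted in the paper --- it is precisely the quantitative Hilbert-space localization that the standard proof makes explicit. With that ingredient acknowledged or re-proved, both routes are complete and yield the same unique limit, since injectivity of the $T$-transform follows from the density of the exponentials you already use.
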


\begin{example}[Vector valued white noise]
\noindent Let $\,{\bf{B}}(t)$, $t\geq 0$, be the $d$-di\-men\-sional Brow\-nian motion as in \eqref{BrownianMotion}. 
Consider $$\frac{{\bf{B}}(t+h,\boldsymbol{\omega}) - {\bf{B}}(t,\boldsymbol{\omega})}{h} = (\langle \frac{\1_{[t,t+h)}}{h} , \omega_1 \rangle , \dots (\langle \frac{\1_{[t,t+h)}}{h} , \omega_d \rangle),\quad h>0.$$ 
Then in the sense of Corollary \ref{seqcor} it exists
\begin{eqnarray*}
\langle {\boldsymbol\delta_t}, {\boldsymbol \omega} \rangle := (\langle \delta_t,\omega_1 \rangle, \dots ,\langle \delta_t,\omega_d \rangle):= \lim_{h\searrow 0} \frac{{\bf{B}}(t+h,\boldsymbol{\omega}) - {\bf{B}}(t,\boldsymbol{\omega})}{h}.
\end{eqnarray*}
Of course for the left derivative we get the same limit. Hence it is natural to call the generalized process $\langle {\boldsymbol\delta_t}, {\boldsymbol \omega} \rangle$, $t\geq0$ in $(S)'$ vector valued white noise. One also uses the notation ${\boldsymbol \omega}(t) =\langle{\boldsymbol\delta_t}, {\boldsymbol \omega} \rangle$, $t\geq 0$. 
\end{example}

Another useful corollary of Theorem \ref{charthm} concerns integration of a family of generalized functions, see \cite{PS91,HKPS93,KLPSW96}.

\begin{corollary}\label{intcor}
Let $(\Lambda, \cA, \nu)$ be a measure space and $\Lambda \ni\lambda \mapsto \Phi(\lambda) \in (S)'$ a mapping. We assume that its $T$--transform $T \Phi$ satisfies the following conditions:
\begin{enumerate}
\item[(i)] The mapping $\Lambda \ni \lambda \mapsto T(\Phi(\lambda))({\bf f})\in \C$ is measurable for all ${\bf f} \in S_d(\R)$.
\item[(ii)] There exists a $p \in \N_0$ and functions $D \in L^{\infty}(\Lambda, \nu)$ and $C \in L^1(\Lambda, \nu)$ such that 
\begin{equation*}
   \abs{T(\Phi(\lambda))(z{\bf f})} \leq C(\lambda)\exp(D(\lambda) \abs{z}^2 \norm{{\bf f}}^2), 
\end{equation*}
for a.e.~$ \lambda \in \Lambda$ and for all ${\bf f} \in S_d(\R)$, $z\in \C$.
\end{enumerate}
Then, in the sense of Bochner integration in $H_{-q} \subset (S)'$ for a suitable $q\in \N_0$, the integral of the family of Hida distributions is itself a Hida distribution, i.e.~$\!\displaystyle \int_{\Lambda} \Phi(\lambda) \, d\nu(\lambda) \in (S)'$ and the $T$--transform interchanges with integration, i.e.~
\begin{equation*}
   T\left( \int_{\Lambda} \Phi(\lambda)  d\nu(\lambda) \right)(\mathbf{f}) =
   	\int_{\Lambda} T(\Phi(\lambda))(\mathbf{f}) \, d\nu(\lambda), \quad \mathbf{f} \in S_d(\R).
\end{equation*}
\end{corollary}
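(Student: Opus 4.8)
The plan is to prove Corollary~\ref{intcor} by reducing the claim about the vector-valued integral $\int_\Lambda \Phi(\lambda)\,d\nu(\lambda)$ to the scalar characterization already established in Theorem~\ref{charthm} and Corollary~\ref{seqcor}. First I would fix ${\bf f} \in S_d(\R)$ and define the candidate functional
\begin{equation*}
G({\bf f}) \defb \int_\Lambda T(\Phi(\lambda))({\bf f}) \, d\nu(\lambda).
\end{equation*}
Hypothesis~(i) guarantees that the integrand $\lambda \mapsto T(\Phi(\lambda))({\bf f})$ is measurable, and hypothesis~(ii), specialized to $z=1$, gives the pointwise bound $|T(\Phi(\lambda))({\bf f})| \leq C(\lambda)\exp(D(\lambda)\norm{{\bf f}}^2)$ with $C \in L^1(\Lambda,\nu)$ and $D \in L^\infty(\Lambda,\nu)$, so the integral defining $G$ converges absolutely. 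The goal is then to verify that $G$ is itself a U-functional, so that by Theorem~\ref{charthm} it is the $T$-transform of a unique Hida distribution, which we will identify with the Bochner integral.

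The key steps are to check conditions U1 and U2 for $G$. For U2, I would take an arbitrary $z \in \C$ and estimate
\begin{equation*}
|G(z{\bf f})| \leq \int_\Lambda |T(\Phi(\lambda))(z{\bf f})| \, d\nu(\lambda) \leq \int_\Lambda C(\lambda) \, d\nu(\lambda) \cdot \exp\!\big(\norm{D}_{L^\infty}|z|^2\norm{{\bf f}}^2\big),
\end{equation*}
pulling the essential supremum of $D$ out of the exponential; this yields the required global bound with constants $\tilde C \defb \norm{C}_{L^1}$ and $\tilde D \defb \norm{D}_{L^\infty}$, uniform in ${\bf f}$ and $z$. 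For U1 (ray analyticity), I would fix ${\bf f}, {\bf g} \in S_d(\R)$ and consider $\lambda \mapsto$ the map $z \mapsto T(\Phi(\lambda))(z{\bf f}+{\bf g})$, which is entire in $z$ for each fixed $\lambda$ by U1 applied to each $\Phi(\lambda)$. To transfer analyticity through the integral, I would use Morera's theorem together with Fubini: for any closed contour the contour integral of $G(z{\bf f}+{\bf g})$ equals the iterated integral, the inner contour integral vanishes by Cauchy's theorem for each $\lambda$, and the exchange of integration order is justified by the locally uniform growth bound from hypothesis~(ii).

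Having shown $G$ is a U-functional, Theorem~\ref{charthm} produces a unique $\Psi \in (S)'$ with $T\Psi = G$. The remaining step is to identify $\Psi$ with the honest Bochner integral $\int_\Lambda \Phi(\lambda)\,d\nu(\lambda)$ in a suitable $H_{-q}$. For this I would invoke the standard fact that the growth bound in U2 localizes $\Psi$ and each $\Phi(\lambda)$ into a common Hilbert space $H_{-q} \subset (S)'$, on which the map $\lambda \mapsto \Phi(\lambda)$ is strongly measurable and Bochner integrable precisely because $\int_\Lambda \norm{\Phi(\lambda)}_{-q}\,d\nu(\lambda) \leq \text{const}\cdot\int_\Lambda C(\lambda)\,d\nu(\lambda) < \infty$; the $T$-transform is continuous from $H_{-q}$ into $\C$ for each ${\bf f}$, hence commutes with the Bochner integral, giving $T(\int_\Lambda \Phi(\lambda)\,d\nu(\lambda))({\bf f}) = G({\bf f}) = T\Psi({\bf f})$ and therefore equality in $(S)'$.

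\emph{The main obstacle} I anticipate is the rigorous justification of the interchange of integration order in the Morera argument for U1, where one must confirm that the growth bound genuinely dominates the integrand uniformly over the compact contour so that Fubini applies; the U2 estimate, by contrast, is essentially a one-line consequence of the $L^1$--$L^\infty$ splitting of the constants in hypothesis~(ii). A secondary technical point is extracting the single Hilbert space index $q$ that simultaneously accommodates the integral and makes the Bochner integral well-defined, which relies on the quantitative relation between the constants in U2 and the embedding $H_{-q} \subset (S)'$.
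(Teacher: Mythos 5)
The paper does not prove this corollary at all---it simply points to \cite{PS91,HKPS93,KLPSW96}---and your proposal reconstructs precisely the argument given in those references: verify that the integrated $T$-transform is a U-functional (U2 by the $L^1$--$L^{\infty}$ splitting of the constants, U1 by a Morera/Fubini argument using the dominating bound from hypothesis (ii)), then invoke the quantitative form of the characterization theorem, which places any distribution with U2-constants $C$, $D$ into a fixed $H_{-q}$ with norm bound proportional to $C$, so that $\lambda \mapsto \Phi(\lambda)$ is Bochner integrable in $H_{-q}$ and the $T$-transform (a continuous functional on $H_{-q}$ for each fixed test function) commutes with the integral. Your approach is therefore correct and essentially identical to the proof the paper relies on.
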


Based on the above theorem, we introduce the following Hida distribution.
\begin{definition}
\label{D:Donsker} 
We define Donsker's delta at $x \in \R$ corresponding to $0 \neq {\boldsymbol\eta} \in L_{d}^2(\R)$ by
\begin{equation*}
   \delta_0(\langle {\boldsymbol\eta},\cdot \rangle-x) := 
   	\frac{1}{2\pi} \int_{\R} \exp(i \lambda (\langle {\boldsymbol\eta},\cdot \rangle -x)) \, d \lambda
\end{equation*}
in the sense of Bochner integration, see e.g.~\cite{HKPS93,LLSW94,W95}. Its $T$--transform in ${\bf f} \in S_d(\R)$ is given by
\begin{equation*}
   T(\delta_0(\langle  {\boldsymbol\eta},\cdot \rangle-x)({\bf f}) 
   	= \frac{1}{\sqrt{2\pi \langle {\boldsymbol\eta}, {\boldsymbol\eta}\rangle}} \exp\left( -\frac{1}{2\langle {\boldsymbol\eta},{\boldsymbol\eta} \rangle}(i\langle {\boldsymbol\eta},{\bf f} \rangle - x)^2 -\frac{1}{2}\langle {\bf f},{\bf f}\rangle \right), \, \, \mathbf{f} \in S_d(\R).
\end{equation*}
\end{definition}

\subsection{Generalized Gauss Kernels}
Here we review a special class of Hida distributions which are defined by their $T$-transform, see e.g.~\cite{HS83},\cite{HKPS93},\cite{GS98a}. Proofs and more details for can be found in \cite{BG11}. Let $\mathcal{B}$ be the set of all continuous bilinear mappings $B:S_{d}(\R) \times S_{d}(\R) \to \C$. Then the functions
\begin{equation*}
S_d(\R)\ni \mathbf{f} \mapsto \exp\left(-\frac{1}{2} B({\bf f},{\bf f})\right) \in \C
\end{equation*}
for all $B\in \mathcal{B}$ are U-functionals. Therefore, by using the characterization of Hida distributions in Theorem \ref{charthm},
the inverse T-transform of these functions 
\begin{equation*}
\Phi_B:=T^{-1} \exp\left(-\frac{1}{2} B\right)
\end{equation*}
are elements of $(S)'$.

\begin{definition}\label{GGK}
The set of {\bf{generalized Gauss kernels}} is defined by
\begin{equation*}
GGK:= \{ \Phi_B,\; B\in \mathcal{B} \}.
\end{equation*}
\end{definition}

\begin{example}{\cite{GS98a}} \label{Grotex} We consider a symmetric trace class operator $\mathbf{K}$ on $L^2_{d}(\R)$ such that $-\frac{1}{2}<\mathbf{K}\leq 0$, then
\begin{align*}
\int_{S'_{d}(\R)} \exp\left(- \langle \omega,\mathbf{K} \omega\rangle \right) \, d\mu(\boldsymbol{\omega}) 
= \left( \det(\mathbf{Id +2K})\right)^{-\frac{1}{2}} < \infty.
\end{align*}
For the definition of $\langle \cdot,\mathbf{K} \cdot \rangle$ see the remark below.
Here $\mathbf{Id}$ denotes the identity operator on the Hilbert space $L^2_{d}(\R)$, and $\det(\mathbf{A})$ of a symmetric trace class operator $\mathbf{A}$ on $L^2_{d}(\R)$ denotes the infinite product of its eigenvalues, if it exists. In the present situation we have $\det(\mathbf{Id +2K})\neq 0$.
There\-fore we obtain that the exponential $g= \exp(-\frac{1}{2} \langle \cdot,\mathbf{K} \cdot \rangle)$ is square-integrable and its T-transform is given by 
\begin{equation*}
Tg({\bf f}) = \left( \det(\mathbf{Id+K}) \right)^{-\frac{1}{2}} \exp\left(-\frac{1}{2} ({\bf f}, \mathbf{(Id+K)^{-1}} {\bf f})\right), \quad {\bf f} \in S_{d}(\R).
\end{equation*}
Therefore $\left( \det(\mathbf{Id+K}) \right)^{\frac{1}{2}}g$ is a generalized Gauss kernel.
\end{example}

\begin{remark}
\begin{itemize}
\item[i)]\label{traceL2} Since a trace class operator is compact, see e.g.~\cite{RS75a}, we have that $\mathbf{K}$ in the above example is diagonalizable, i.e.~
\begin{equation*}
\mathbf{K}\mathbf{f} = \sum_{k=1}^{\infty} k_n (\mathbf{f},\mathbf{e}_n)\mathbf{e}_n, \quad \mathbf{f} \in L_d^2(\R),
\end{equation*}
where $(\mathbf{e}_n)_{n\in \N}$ denotes an eigenbasis of the corresponding eigenvalues $(k_n)_{n\in \N}$ with $k_n \in (-\frac{1}{2}, 0 ]$, for all $n \in \N$. Since $K$ is compact, we have that $\lim\limits_{n\to \infty} k_n =0$ and since $\mathbf{K}$ is trace class we also have $\sum_{n=1}^{\infty} (\mathbf{e}_n, -\mathbf{K} \mathbf{e}_n)< \infty$. We define for ${\boldsymbol \omega }\in S_d'(\R)$
\begin{eqnarray*} 
- \langle {\boldsymbol \omega }, \mathbf{K} {\boldsymbol \omega } \rangle := \lim_{N \to \infty} \sum_{n=1}^N \langle \mathbf{e}_n, {\boldsymbol \omega }\rangle (-k_n)\langle \mathbf{e}_n,{\boldsymbol \omega } \rangle. 
\end{eqnarray*}
Then as a limit of measurable functions ${\boldsymbol \omega } \mapsto -\langle {\boldsymbol \omega }, \mathbf{K} {\boldsymbol \omega } \rangle$  is measurable and hence 
\begin{eqnarray*} 
\int\limits_{S_d'(\R)} \exp(-  \langle {\boldsymbol \omega }, \mathbf{K} {\boldsymbol \omega }\rangle ) \, d\mu({\boldsymbol \omega }) \in [0, \infty].
\end{eqnarray*}
The explicit formula for the $T$-transform and expectation then follow by a straightforward calculation with help of the above limit procedure. 
\item[ii)] In the following, if we apply operators or bilinear forms defined on $L^2_d(\R)$ to generalized functions from $S'_d(\R)$, we are always having in mind the interpretation as in \ref{traceL2}.
\end{itemize}
\end{remark}
\begin{definition}\label{D:Nexp}\cite{BG11}$\;$
Let $\mathbf{K}: L^2_d(\R)_{\C} \to L^2_d(\R)_{\C}$ be linear and continuous such that:
\begin{itemize}
\item[(i)] $\mathbf{Id+K}$ is injective. 
\item[(ii)] There exists $p \in \N_0$ such that $(\mathbf{Id+K})(L^2_{d}(\R)_{\C}) \subset H_{p,\C}$ is dense.
\item[(iii)] There exist $q \in\N_0$ such that $\mathbf{(Id+K)^{-1}} :H_{p,\C} \to H_{-q,\C}$ is continuous with $p$ as in (ii).
\end{itemize}
Then we define the normalized exponential
\begin{equation}\label{Nexp}
{\rm{Nexp}}(- \frac{1}{2} \langle \cdot ,\mathbf{K} \cdot \rangle)
\end{equation}
by
\begin{equation*}
T({\rm{Nexp}}(- \frac{1}{2} \langle \cdot ,\mathbf{K} \cdot \rangle))({\bf f}) := \exp(-\frac{1}{2} \langle {\bf f}, \mathbf{(Id+K)^{-1}} {\bf f} \rangle),\quad {\bf f} \in S_d(\R).
\end{equation*}
\end{definition}

\begin{remark}
The "normalization" of the exponential in the above definition can be regarded as a division of a divergent factor. In an informal way one can write
\begin{multline*}
T({\rm{Nexp}}(- \frac{1}{2} \langle \cdot ,\mathbf{K} \cdot \rangle))({\mathbf f})=\frac{T(\exp(- \frac{1}{2} \langle \cdot ,\mathbf{K} \cdot \rangle))(\mathbf{f})}{T(\exp(- \frac{1}{2} \langle \cdot ,\mathbf{K} \cdot \rangle))(0)}\\
=\frac{T(\exp(- \frac{1}{2} \langle \cdot ,\mathbf{K} \cdot \rangle))(\mathbf{f})}{\sqrt{\det(\mathbf{Id+K})}} , \quad {\bf f} \in S_d(\R), 
\end{multline*}
i.e.~ if the determinant in the Example \ref{Grotex} above is not defined, we can still define the normalized exponential by the T-transform without the diverging prefactor. The assumptions in the above definition then guarantee the existence of the generalized Gauss kernel in \eqref{Nexp}.
\end{remark}

\begin{example}\label{pointprod}
	For sufficiently "nice" operators $\mathbf{K}$ and $\mathbf{L}$ on $L^2_{d}(\R)_{\C}$ we can define the product 
			\begin{equation*}
				{\rm{Nexp}}\big( - \frac{1}{2} \langle \cdot,\mathbf{K} \cdot \rangle  \big) \cdot \exp\big(-\frac{1}{2} \langle \cdot,\mathbf{L}\cdot \rangle \big)
			\end{equation*}
	of two square-integrable functions. Its $T$-transform is then given by 
			\begin{multline*}
				T\Big({\rm{Nexp}}( - \frac{1}{2} \langle \cdot,\mathbf{K} \cdot\rangle ) \cdot \exp( - \frac{1}{2} \langle \cdot,\mathbf{L} \cdot\rangle )\Big)({\bf f})\\
				=\sqrt{\frac{1}{\det(\mathbf{Id+L(Id+K)^{-1}})}}
				\exp(-\frac{1}{2} \langle {\bf f}, \mathbf{(Id+K+L)^{-1}} {\bf f} \rangle ),\quad {\bf f} \in S_{d}(\R),
			\end{multline*}	
	in the case the right hand side indeed is a U-functional.		
\end{example}

\begin{definition}\label{prodnexp}
Let $\mathbf{K}: L^2_{d}(\R)_{\C} \to L^2_{d}(\R)_{\C}$ be as in Definition \ref{D:Nexp}, i.e.~$${\rm{Nexp}}(- \frac{1}{2} \langle \cdot ,\mathbf{K} \cdot \rangle)$$ exists. Furthermore let $\mathbf{L}: L^2_d(\R)_{\C} \to L^2_d(\R)_{\C}$ be trace class. Then we define 
$$
{\rm{Nexp}}( - \frac{1}{2} \langle \cdot,\mathbf{K} \cdot\rangle ) \cdot \exp( - \frac{1}{2} \langle \cdot,\mathbf{L} \cdot\rangle )$$ via its $T$-transform, whenever 
\begin{multline*}
				T\Big({\rm{Nexp}}( - \frac{1}{2} \langle \cdot,\mathbf{K} \cdot\rangle ) \cdot \exp( - \frac{1}{2} \langle \cdot,\mathbf{L} \cdot\rangle )\Big)({\bf f})\\
				=\sqrt{\frac{1}{\det(\mathbf{Id+L(Id+K)^{-1}})}}
				\exp(-\frac{1}{2} \langle {\bf f}, \mathbf{(Id+K+L)^{-1}} {\bf f} \rangle ),\quad {\bf f} \in S_{d}(\R),
			\end{multline*}	
is a U-functional.
\end{definition}

In the case $\mathbf{g} \in S_d(\R)$, $c\in\C$ the product between the Hida distribution $\Phi$ and the Hida test function $\exp(i \langle \mathbf{g},. \rangle + c)$ can be defined because $(S)$ is a continuous algebra under point-wise multiplication. The next definition is an extension of this product.

\begin{definition}\label{linexp}
The point-wise product of a Hida distribution $\Phi \in (S)'$ with an exponential of a linear term, i.e.~
\begin{equation*}
\Phi \cdot \exp(i \langle {\bf g}, \cdot \rangle  +c), \quad {\bf g} \in L^2_{d}(\R)_{\C}, \, c \in \C,
\end{equation*}
is defined by 
\begin{equation*}
T(\Phi \cdot \exp(i\langle  {\bf g}, \cdot \rangle  + c))({\bf f}):= T\Phi({\bf f}+{\bf g})\exp(c),\quad {\bf f} \in S_d(\R),  
\end{equation*}
if $T\Phi$ has a continuous extension to $L^2_d(\R)_{\C}$ and the term on the right-hand side is a U-functional in ${\bf f} \in S_d(\R)$.
\end{definition}

\begin{definition}\label{donsker}
Let $D \subset \R$ with $0 \in \overline{D}$. Under the assumption that $T\Phi$ has a continuous extension to $L^2_d(\R)_{\C}$, ${\boldsymbol\eta}\in L^2_d(\R)_{\C}$, $y \in \R$, $\lambda \in \gamma_{\alpha}:=\{\exp(-i\alpha)s|\, s \in \R\}$ and that the integrand 
\begin{equation*}
\gamma_{\alpha} \ni \lambda \mapsto \exp(-i\lambda y)T\Phi({\bf f}+\lambda {\boldsymbol\eta}) \in \C
\end{equation*}
fulfills the conditions of Corollary \ref{intcor} for all $\alpha \in D$. Then one can define the product 
\begin{equation*}
\Phi \cdot \delta_0(\langle {\boldsymbol\eta}, \cdot \rangle-y),
\end{equation*}
by
\begin{equation*}
T(\Phi \cdot \delta_0(\langle {\boldsymbol\eta}, \cdot \rangle-y))({\bf f})
:= \lim_{\alpha \to 0} \int_{\gamma_{\alpha}} \exp(-i \lambda y) T\Phi({\bf f}+\lambda {\boldsymbol\eta}) \, d \lambda.
\end{equation*}
Of course under the assumption that the right-hand side converges in the sense of Corollary \ref{seqcor}, see e.g.~\cite{GS98a}.
\end{definition}

This definition is motivated by the definition of Donsker's delta, see Definition \ref{D:Donsker}.

\begin{lemma}{\cite{BG11}}\label{thelemma}
Let  $\mathbf{L}$ be a $d\times d$ block operator matrix on $L^2_{d}(\R)_{\C}$ acting component-wise such that all entries are bounded operators on $L^2(\R)_{\C}$.
Let $\mathbf{K}$ be a d $\times d$ block operator matrix on $L^2_{d}(\R)_{\C}$, such that $\mathbf{Id+K}$ and $\mathbf{N}=\mathbf{Id}+\mathbf{K}+\mathbf{L}$ are bounded with bounded inverse. Furthermore assume that $\det(\mathbf{Id}+\mathbf{L}(\mathbf{Id}+\mathbf{K})^{-1})$ exists and is different from zero (this is e.g.~the case if $\mathbf{L}$ is trace class and -1 in the resolvent set of $\mathbf{L}(\mathbf{Id}+\mathbf{K})^{-1}$).
Let $M_{\mathbf{N}^{-1}}$ be the matrix given by an orthogonal system $({\boldsymbol\eta}_k)_{k=1,\dots J}$ of non--zero functions from $L^2_d(\R)$, $J\in \N$, under the bilinear form $\left( \cdot ,\mathbf{N}^{-1} \cdot \right)$, i.e.~ $(M_{\mathbf{N}^{-1}})_{i,j} = \left( {\boldsymbol\eta}_i ,\mathbf{N}^{-1} {\boldsymbol\eta}_j \right)$, $1\leq i,j \leq J$.
Under the assumption that either 
\begin{eqnarray*}
\Re(M_{\mathbf{N}^{-1}}) >0 \quad \text{ or }\quad \Re(M_{\mathbf{N}^{-1}})=0 \,\text{ and } \,\Im(M_{\mathbf{N}^{-1}}) \neq 0,
\end{eqnarray*} 
where $M_{\mathbf{N}^{-1}}=\Re(M_{\mathbf{N}^{-1}}) + i \Im(M_{\mathbf{N}^{-1}})$ with real matrices $\Re(M_{\mathbf{N}^{-1}})$ and $\Im(M_{\mathbf{N}^{-1}})$, \\
then
\begin{equation*}
\Phi_{\mathbf{K},\mathbf{L}}:={\rm Nexp}\big(-\frac{1}{2} \langle \cdot, \mathbf{K} \cdot \rangle \big) \cdot \exp\big(-\frac{1}{2} \langle \cdot, \mathbf{L} \cdot \rangle \big) \cdot \exp(i \langle \cdot, {\bf g} \rangle)
\cdot \prod_{i=1}^J \delta_0 (\langle \cdot, {\boldsymbol\eta}_k \rangle-y_k),
\end{equation*}
for ${\bf g} \in L^2_{d}(\R,\C),\, t>0,\, y_k \in \R,\, k =1\dots,J$, exists as a Hida distribution. \\
Moreover for ${\bf f} \in S_d(\R)$
\begin{multline}\label{magicformula}
T\Phi_{\mathbf{K},\mathbf{L}}({\bf f})=\frac{1}{\sqrt{(2\pi)^J  \det((M_{\mathbf{N}^{-1}}))}}
\sqrt{\frac{1}{\det(\mathbf{Id}+\mathbf{L}(\mathbf{Id}+\mathbf{K})^{-1})}}\\ 
\times \exp\bigg(-\frac{1}{2} \big(({\bf f}+{\bf g}), \mathbf{N}^{-1} ({\bf f}+{\bf g})\big) \bigg)
\exp\bigg(-\frac{1}{2} (u,(M_{\mathbf{N}^{-1}})^{-1} u)\bigg),
\end{multline}
where
\begin{equation*}
u= \left( \big(iy_1 +({\boldsymbol\eta}_1,\mathbf{N}^{-1}({\bf f}+{\bf g})) \big), \dots, \big(iy_J +({\boldsymbol\eta}_J,\mathbf{N}^{-1}({\bf f}+{\bf g})) \big) \right).
\end{equation*}
\end{lemma}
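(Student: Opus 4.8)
The plan is to compute $T\Phi_{\mathbf{K},\mathbf{L}}$ explicitly by peeling off the four factors of $\Phi_{\mathbf{K},\mathbf{L}}$ one at a time, in each step invoking the definition that governs the corresponding product, and then to read off from the resulting closed form that it is a U-functional; existence of $\Phi_{\mathbf{K},\mathbf{L}}$ as a Hida distribution then follows from the characterization Theorem \ref{charthm}, while the convergence statements needed along the way are supplied by Corollaries \ref{intcor} and \ref{seqcor}. Thus the whole argument is really one long but structured $T$-transform computation whose only non-routine content is the verification of the growth estimates.

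First I would treat the Gaussian part. By Definition \ref{D:Nexp} the hypothesis that $\mathbf{Id}+\mathbf{K}$ is bounded with bounded inverse guarantees that ${\rm Nexp}(-\tfrac{1}{2}\langle\cdot,\mathbf{K}\cdot\rangle)$ exists with
\[
T\big({\rm Nexp}(-\tfrac{1}{2}\langle\cdot,\mathbf{K}\cdot\rangle)\big)(\mathbf{f}) = \exp\big(-\tfrac{1}{2}\langle\mathbf{f},(\mathbf{Id}+\mathbf{K})^{-1}\mathbf{f}\rangle\big).
\]
Multiplying by $\exp(-\tfrac{1}{2}\langle\cdot,\mathbf{L}\cdot\rangle)$, Definition \ref{prodnexp} together with Example \ref{pointprod} applies because $\det(\mathbf{Id}+\mathbf{L}(\mathbf{Id}+\mathbf{K})^{-1})$ exists and is nonzero and $\mathbf{N}=\mathbf{Id}+\mathbf{K}+\mathbf{L}$ has bounded inverse; this yields the prefactor $\det\!\big(\mathbf{Id}+\mathbf{L}(\mathbf{Id}+\mathbf{K})^{-1}\big)^{-1/2}$ and replaces $(\mathbf{Id}+\mathbf{K})^{-1}$ by $\mathbf{N}^{-1}$ in the exponent. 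Finally, since $\mathbf{N}^{-1}$ is bounded on $L^2_d(\R)_{\C}$ the resulting $T$-transform extends continuously to $L^2_d(\R)_{\C}$, so Definition \ref{linexp} lets me multiply by $\exp(i\langle\cdot,\mathbf{g}\rangle)$, which simply shifts $\mathbf{f}\mapsto\mathbf{f}+\mathbf{g}$. At this stage the $T$-transform of the product of the first three factors equals $\det\!\big(\mathbf{Id}+\mathbf{L}(\mathbf{Id}+\mathbf{K})^{-1}\big)^{-1/2}\exp\!\big(-\tfrac{1}{2}\big((\mathbf{f}+\mathbf{g}),\mathbf{N}^{-1}(\mathbf{f}+\mathbf{g})\big)\big)$, and boundedness of $\mathbf{N}^{-1}$ makes this a U-functional.

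The heart of the proof is the last factor, the product with the $J$ Donsker deltas. Applying Definition \ref{donsker} iteratively (one factor $\delta_0(\langle\cdot,\boldsymbol{\eta}_k\rangle-y_k)$ at a time) turns the $T$-transform into the $\alpha\to 0$ limit of a $J$-fold contour integral over $\gamma_{\alpha}$ of $\exp(-i\sum_{k}\lambda_k y_k)$ times the previous $T$-transform evaluated at $\mathbf{f}+\sum_k\lambda_k\boldsymbol{\eta}_k$. Expanding the quadratic form $\big((\mathbf{f}+\mathbf{g}+\sum_k\lambda_k\boldsymbol{\eta}_k),\mathbf{N}^{-1}(\cdots)\big)$ bilinearly in the $\lambda_k$, the coefficient of the quadratic terms is precisely the matrix $M_{\mathbf{N}^{-1}}$ with entries $(\boldsymbol{\eta}_i,\mathbf{N}^{-1}\boldsymbol{\eta}_j)$, and the coefficients of the linear terms assemble into the vector $u$ of the statement. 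What remains is a single $J$-dimensional complex Gaussian integral $\int\exp(-\tfrac{1}{2}\lambda^{\top}M_{\mathbf{N}^{-1}}\lambda - u^{\top}\lambda)\,d\lambda$ over the rotated contour, which I evaluate by the standard multivariate Gaussian formula to produce the two remaining factors of \eqref{magicformula}, namely $\big((2\pi)^J\det M_{\mathbf{N}^{-1}}\big)^{-1/2}$ and $\exp(-\tfrac{1}{2}(u,(M_{\mathbf{N}^{-1}})^{-1}u))$. Orthogonality of $(\boldsymbol{\eta}_k)_{k=1,\dots,J}$ enters here to guarantee that $M_{\mathbf{N}^{-1}}$ is a genuine, non-degenerate $J\times J$ matrix so that $(M_{\mathbf{N}^{-1}})^{-1}$ exists.

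I expect the main obstacle to be the analytic justification of this last integral rather than its formal evaluation. Concretely, the Gaussian integral converges on the real line only when $\Re(M_{\mathbf{N}^{-1}})>0$; in the degenerate case $\Re(M_{\mathbf{N}^{-1}})=0$, $\Im(M_{\mathbf{N}^{-1}})\neq 0$ one must rotate onto $\gamma_{\alpha}$ and pass to the limit $\alpha\to0$, and the dichotomy assumed on $M_{\mathbf{N}^{-1}}$ is exactly what makes the real part of the rotated quadratic form positive, so that the integral is absolutely convergent. The delicate point is to produce, uniformly in the integration variables and in the contour parameter $\alpha$, a bound of U-functional type $C\exp(D|z|^2\|\mathbf{f}\|_p^2)$ for the integrand evaluated at $z\mathbf{f}$; once such a bound with $C\in L^1$ and $D\in L^{\infty}$ is in place, Corollary \ref{intcor} legitimizes interchanging $T$ with the Bochner integrals defining the deltas, and Corollary \ref{seqcor} legitimizes the limit $\alpha\to0$, yielding simultaneously the existence of $\Phi_{\mathbf{K},\mathbf{L}}\in(S)'$ and formula \eqref{magicformula}. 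Establishing this uniform estimate, keeping careful track of the cross terms between $\mathbf{f}$, $\mathbf{g}$ and the $\boldsymbol{\eta}_k$, is the technical crux of the argument.
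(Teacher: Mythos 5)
Your proposal is correct and follows essentially the intended argument: the paper itself states Lemma \ref{thelemma} without proof, citing \cite{BG11}, and the proof there proceeds exactly as you describe --- peeling off the factors via Definition \ref{prodnexp}, Definition \ref{linexp} and Definition \ref{donsker}, reducing the delta factors to a $J$-dimensional Gaussian integral over the rotated contours $\gamma_\alpha$ whose quadratic part is $M_{\mathbf{N}^{-1}}$ and linear part is $u$, and securing existence and formula \eqref{magicformula} through the growth estimates required by Theorem \ref{charthm} and Corollaries \ref{intcor} and \ref{seqcor}. You also correctly identify the role of the dichotomy on $\Re(M_{\mathbf{N}^{-1}})$ and $\Im(M_{\mathbf{N}^{-1}})$ as making the rotated quadratic form coercive, which is precisely the technical crux handled in \cite{BG11}.
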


\section{Charged particle in a constant magnetic field}
\label{magnetic}
In this subsection we want to calculate the transition amplitude for the movement of a charged particle in a constant magnetic field. Investigations of this system for the Feynman integrand had been done in white noise analysis in \cite{G96}, \cite{J10} and \cite{BGJ13}.  Here we consider the motion in the plane orthogonal to the direction of the magnetic field. Note that the propagator in three dimension can be obtained by multiplying the expression with the free motion propagator along the axis of the magnetic field vector. 
As in \ref{cpaction} we have 
\begin{multline}
S(q,p,t) = \int_0^{t} {\bf p} \dot{\vec{{\bf x}}} - \frac{1}{2m} \bigg({\bf p} -\frac{q}{c} B_z \left(\begin{array}{c} -x_2\\x_1 \end{array}\right)  \bigg)^2 \; d\tau\\ \nonumber
= \int_0^{t} {\bf p} \dot{\vec{{\bf x}}} - \frac{1}{2m}(p_1^2+p_2^2) +\frac{q}{mc} B_z (x_1 p_2 -x_2 p_1) -\frac{q^2 B_z^2}{2mc^2}(x_1^2+x_2^2) d\tau.
\end{multline}
For simplicity we set $x_{1,0}=x_{2,0}=0, t_0=0$ and $m=\hbar=1$. Then with $k = \frac{q B_z}{mc}$ we obtain
\begin{eqnarray*}
I_{CP}={\rm N}\exp\big(-\frac{1}{2} \langle \cdot, K \cdot \rangle \big) \cdot \exp\big(-\frac{1}{2} \langle \cdot, L \cdot \rangle\big) \cdot \delta\big(\langle  {\bf 1}_{[0,t)}, \cdot_{x1} \rangle -y_1 \big)\delta\big(\langle {\bf 1}_{[0,t)}, \cdot_{x2} \rangle -y_2 \big),
\end{eqnarray*}
as in \ref{anpsfeycp}.
For the kinetic energy part and the local simulation of the flat measure we obtain the operator matrix 
\begin{eqnarray*} 
K=\left(
\begin{array}{l l l l }
-{\bf 1}_{[0,t)} & -i{\bf 1}_{[0,t)}& 0 & 0 \\
-i{\bf 1}_{[0,t)}& -{\bf 1}_{[0,t)}+i{\bf 1}_{[0,t)} & 0 & 0\\
0 & 0& -{\bf 1}_{[0,t)} & -i{\bf 1}_{[0,t)}\\
0 &0& -i{\bf 1}_{[0,t)}& -{\bf 1}_{[0,t)}+i{\bf 1}_{[0,t)}
\end{array}
\right).
\end{eqnarray*}
In addition we have to model the potential. We use an ansatz where we have an upper triangular block matrix, i.e.
\begin{eqnarray*}
L=\left(
\begin{array}{l l l l}
ik^2 A & 0 & 0 & -2ikB^* \\
0 & 0 & 2ikB & 0 \\
0 & 0 & ik^2 A & 0\\
0&0&0&0 \\
\end{array}
\right),
\end{eqnarray*} 
with 
\begin{eqnarray*}
\R \ni s \mapsto Af(s) =& \1_{[0,t)}(s)\int_s^t \int_0^r f(\tau) \, d\tau \, dr\text{ and }\\
\R \ni s \mapsto Bf(s) =& \1_{[0.t)}(s) \int_0^s f(r) \, dr,
\end{eqnarray*}
for $f \in L^2(\R)_{\C}$, $B^*$ denotes the dual operator of $B$ w.r.t.~the dual pairing. Note that we have $\langle Ag,f\rangle = \langle Bg,Bf\rangle$ for all $f,g \in L^2(\R)_{\C}$. 
Thus 
\begin{multline*}
Id+K+L=N\\
= \left(\begin{array}{l l l l} {\bf 1}_{[0,t)^c}& 0 & 0 & 0\\
 0&{\bf 1}_{[0,t)^c} & 0 & 0 \\
 0& 0& {\bf 1}_{[0,t)^c} & 0 \\
 0&0& 0& {\bf 1}_{[0,t)^c} 
 \end{array} \right)
 +\left(
\begin{array}{l l l l}
ik^2 A &-i{\bf 1}_{[0,t)} & 0 & -2ik B^*\\
-i{\bf 1}_{[0,t)} & i{\bf 1}_{[0,t)} & 2ik B & 0 \\
0 & 0 & ik^2 A & -i{\bf 1}_{[0,t)} \\
0& 0& -i{\bf 1}_{[0,t)}& i{\bf 1}_{[0,t)}
\end{array}
\right).
\end{multline*}

In the sequel we identify the subspace of functions from $L^2_d(\R)_{\C}$ zero on $[0,t)$ or $[0,t)^c$ with  $L^2_d([0,t))_{\C}$ or $L^2_d([0,t)^c)_{\C}$, respectively. Then we have the orthogonal decomposition:
$$L^2_d(\R)_{\C}=L^2_d([0,t))_{\C} \bot L^2_d([0,t)^c)_{\C}.$$

\begin{proposition}\label{Ntriboundedinvertible}
The operator 
\begin{multline*}
Id+K+L=N\\
= \left(\begin{array}{l l l l} {\bf 1}_{[0,t)^c}& 0 & 0 & 0\\
 0&{\bf 1}_{[0,t)^c} & 0 & 0 \\
 0& 0& {\bf 1}_{[0,t)^c} & 0 \\
 0&0& 0& {\bf 1}_{[0,t)^c} 
 \end{array} \right)
 +\left(
\begin{array}{l l l l}
ik^2 A &-i{\bf 1}_{[0,t)} & 0 & -2ik B^*\\
-i{\bf 1}_{[0,t)} & i{\bf 1}_{[0,t)} & 2ik B & 0 \\
0 & 0 & ik^2 A & -i{\bf 1}_{[0,t)} \\
0& 0& -i{\bf 1}_{[0,t)}& i{\bf 1}_{[0,t)}
\end{array}
\right)
\end{multline*}
is linear and bounded on $L^2_4(\R)_{\C}$ and has a bounded inverse.
\end{proposition}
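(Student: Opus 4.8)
The plan is to exploit the orthogonal decomposition $L^2_4(\R)_{\C}=L^2_4([0,t))_{\C}\perp L^2_4([0,t)^c)_{\C}$ recorded just above. Every entry of the second summand of $N$ carries a factor $\1_{[0,t)}$ (the operators $A$, $B$, $B^*$ and the entries $\pm i\1_{[0,t)}$ all vanish off $[0,t)$), while the first summand is exactly $\1_{[0,t)^c}\,Id$. Hence $N$ leaves both closed subspaces invariant, acts as the identity on $L^2_4([0,t)^c)_{\C}$, and it suffices to analyse its restriction $\tilde N$ to $L^2_4([0,t))_{\C}$, where the $\1_{[0,t)^c}$-terms drop out and $A,B,B^*,\1_{[0,t)}$ are understood as operators on $L^2([0,t))_{\C}$.

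Boundedness is then routine. The operator $B$ is the Volterra operator $f\mapsto\int_0^{\cdot}f$, $B^*$ is its adjoint, and $A=B^*B$; each has a bounded kernel on the finite square $[0,t)\times[0,t)$, so all three are Hilbert--Schmidt, in particular bounded (indeed compact). A $4\times4$ matrix whose entries are bounded operators is bounded, so $\tilde N$, and with it $N$, is bounded on $L^2_4(\R)_{\C}$.

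For the inverse I would factor $\tilde N=(Id+L(Id+K)^{-1})(Id+K)$, following the hint in Lemma \ref{thelemma}. First, $Id+K$ restricted to $L^2([0,t))_{\C}$ is the constant-coefficient block operator $\begin{pmatrix}0&-i\\-i&i\end{pmatrix}$ (acting in each of the two coordinate pairs), whose scalar determinant equals $1$; hence $Id+K$ is boundedly invertible with an explicit constant-coefficient inverse. Second, since $A$, $B$, $B^*$ are compact, $L$ is compact, so $L(Id+K)^{-1}$ is compact and $Id+L(Id+K)^{-1}$ is a compact perturbation of the identity, i.e.\ Fredholm of index zero; by the Fredholm alternative it is boundedly invertible as soon as it is injective, and then $\tilde N$ inherits a bounded inverse. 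Equivalently, using the block upper-triangular form $\tilde N=\left(\begin{smallmatrix}P&Q\\0&P\end{smallmatrix}\right)$ with $P=\left(\begin{smallmatrix}ik^2A&-i\\-i&i\end{smallmatrix}\right)$, invertibility of $\tilde N$ reduces to invertibility of $P$, and a Schur-complement computation reduces this to invertibility of $Id-k^2A$ on $L^2([0,t))_{\C}$.

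The main obstacle is therefore the injectivity statement, i.e.\ triviality of $\ker\tilde N$, equivalently $\ker(Id-k^2A)$. I would take $f$ with $k^2Af=f$, use the integral form $Af(s)=\int_s^t\int_0^r f(\tau)\,d\tau\,dr$ to see that $f$ is smooth, and differentiate twice to obtain the ordinary differential equation $f''+k^2f=0$ together with the boundary conditions $f'(0)=0$ and $f(t)=0$ read off at the endpoints. Its solution space is spanned by $\cos(k\,\cdot)$, so the kernel is trivial precisely when $\cos(kt)\neq0$; this spectral computation carries the real content of the proposition, the boundedness and the algebraic reduction above being comparatively routine. Once triviality of the kernel is established, the Fredholm alternative delivers the bounded inverse of $Id+L(Id+K)^{-1}$, and hence of $N$.
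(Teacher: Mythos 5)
Your proof is correct, and its skeleton coincides with the paper's: the same orthogonal decomposition, the same reduction to $L^2_4([0,t))_{\C}$ where $N$ acts as $i\left(\begin{smallmatrix}M&P\\0&M\end{smallmatrix}\right)$, and the same observation that a block upper-triangular operator matrix with invertible diagonal blocks has an explicit bounded inverse. The genuine difference is how invertibility of the diagonal block is established. The paper simply declares $M=\left(\begin{smallmatrix}k^2A&-\1_{[0,t)}\\-\1_{[0,t)}&\1_{[0,t)}\end{smallmatrix}\right)$ to be ``bounded invertible'' and writes down $M^{-1}$ in terms of $(k^2A-\1_{[0,t)})^{-1}$ (misprinted there with $kA$ in place of $k^2A$), without ever proving that this inverse exists as a bounded operator; it also asserts that the compact operator $A$ is ``invertible'' on $L^2([0,t))_{\C}$, which cannot be meant as bounded invertibility. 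You instead reduce invertibility of the diagonal block by a Schur complement to invertibility of $Id-k^2A$ and settle that by compactness, the Fredholm alternative, and the ODE computation of the kernel ($f''=-k^2f$, $f'(0)=0$, $f(t)=0$). This buys two things the paper's version lacks: a complete argument for the one step carrying real content, and the precise condition $\cos(kt)\neq 0$, i.e.\ $t\neq\frac{(2n-1)\pi}{2k}$, under which the claim holds. That condition is not optional: for $\cos(kt)=0$ the vector $\left(\cos(k\,\cdot)\1_{[0,t)},\cos(k\,\cdot)\1_{[0,t)},0,0\right)$ lies in the kernel of $N$, so the proposition as stated, with no restriction on $t$, is actually false; the paper tacitly concedes this by imposing exactly this restriction in all subsequent statements, though not here. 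What the paper's route buys instead is the explicit formula \eqref{Ninvcp} for $N^{-1}$, recorded for later use; your argument gives existence without producing that formula, which is harmless, since the paper's subsequent computation of $M_{N^{-1}}$ in any case obtains the needed preimages $N^{-1}{\boldsymbol\eta}_k$ by solving ODEs with boundary conditions --- the same technique as your kernel computation --- rather than by substituting into \eqref{Ninvcp}.
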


\begin{proof}
The operator $A$ is in the trace class, moreover $B$ and $B^*$ are compact operators. Moreover $A_{\mid{L^2([0,t))_{\C}}}: L^2([0,t))\to L^2([0,t))$ is invertible. For convenience, since the operator $A$ leaves $L^2([0,t))_{\C}$ invariant and acts trivial on $L^2([0,t)^c)_{\C}$, we denote $A_{\mid L^2([0,t))_{\C}}$ by $A$, when there is no danger of confusion. Since $N$ is the identity when restricted on $L^2_4([0,t)^c)_{\C}$, we can restrict ourselves to $L^2_4([0,t))_{\C}$. We then have
\begin{multline*}
N_{\mid L_4^2([0,t))_{\C}}=\left(
\begin{array}{l l l l}
ik^2 A &-i{\bf 1}_{[0,t)} & 0 & -2ik B^*\\
-i{\bf 1}_{[0,t)} & i{\bf 1}_{[0,t)} & 2ik B & 0 \\
0 & 0 & ik^2 A & -i{\bf 1}_{[0,t)} \\
0& 0& -i{\bf 1}_{[0,t)}& i{\bf 1}_{[0,t)}
\end{array}
\right)\\=i\left(
\begin{array}{l l l l}
k^2 A &-{\bf 1}_{[0,t)} & 0 & -2k B^*\\
-{\bf 1}_{[0,t)} & {\bf 1}_{[0,t)} & 2k B & 0 \\
0 & 0 & k^2 A & -{\bf 1}_{[0,t)} \\
0& 0& -{\bf 1}_{[0,t)}& {\bf 1}_{[0,t)}
\end{array}
\right),
\end{multline*}
which is of the form
$$R=\left(
\begin{array}{l l}
M&P\\0&M
\end{array}
\right),$$
with a bounded invertible matrix $M$. The inverse of such a matrix is given by
$$R^{-1}=\left(
\begin{array}{l l}
M^{-1}& -M^{-1} P M^{-1} \\0&M^{-1}
\end{array}
\right).$$
Indeed
\begin{multline*}
\left(
\begin{array}{l l}
M&P\\0&M
\end{array}
\right)\left(
\begin{array}{l l}
M^{-1}& -M^{-1} P M^{-1} \\0&M^{-1}
\end{array}
\right) \\
=\left(
\begin{array}{l l}
M M^{-1}& -M M^{-1} P M^{-1} + PM^{-1} \\0&MM^{-1}
\end{array}
\right)
=
\left(
\begin{array}{l l}
Id& 0\\0&Id
\end{array}
\right).\end{multline*}
Now in our case 
$$
M^{-1}=\left(\begin{array}{l l}
(kA- \1_{[0,t)})^{-1} &  (kA- \1_{[0,t)})^{-1} \\
(kA- \1_{[0,t)})^{-1}& kA(kA- \1_{[0,t)})^{-1}
\end{array} \right).$$
Thus
\begin{align*}
M^{-1}PM^{-1}&=\left(\begin{array}{l l}
(k^2A- \1_{[0,t)})^{-1} &  (k^2A- \1_{[0,t)})^{-1} \\[.3 cm]
(k^2A- \1_{[0,t)})^{-1}& k^2A(k^2A- \1_{[0,t)})^{-1}
\end{array} \right)\\[.3 cm]
&\times\left(\begin{array}{l l}
0 &  -2kB^* \\[.3 cm]
2kB& 0
\end{array} \right)
\left(\begin{array}{l l}
(k^2A- \1_{[0,t)})^{-1} &  (k^2A- \1_{[0,t)})^{-1} \\[.3 cm]
(k^2A- \1_{[0,t)})^{-1}& k^2A(k^2A- \1_{[0,t)})^{-1}
\end{array} \right)\\[.3 cm]
&=\left(\begin{array}{l l}
(k^2A- \1_{[0,t)})^{-1} &  (k^2A- \1_{[0,t)})^{-1} \\[.3 cm]
(k^2A- \1_{[0,t)})^{-1}& k^2A(k^2A- \1_{[0,t)})^{-1}
\end{array} \right)\\[.3 cm]
&\times\left(\begin{array}{l l}
-2kB^*(k^2A- \1_{[0,t)})^{-1} & -2k^3B^*A(kA- \1_{[0,t)})^{-1} \\[.3 cm]
2kB(k^2A- \1_{[0,t)})^{-1}& 2kB(k^2A- \1_{[0,t)})^{-1}
\end{array} \right).
\end{align*}
Hence we have that $N_{|L_4^2([0,t))_{\C}}$ is bounded invertible. Thus $N$ is invertible, since $N_{|L_4^2([0,t)^c)_{\C}}=Id$. Moreover the inverse yields: 
\begin{multline}\label{Ninvcp}
N^{-1}=\left(\begin{array}{l l l l} \1_{[0,t)^c}& 0 & 0 & 0\\[0.3 cm]
 0&{\1}_{[0,t)^c} & 0 & 0 \\[0.3 cm]
 0& 0& \1_{[0,t)^c} & 0 \\[0.3 cm]
 0&0& 0& \1_{[0,t)^c} 
 \end{array} \right)\\
 +\frac{1}{i}\left([0.3 cm]
\begin{array}{c c c c}
\1_{[0,t)} &  \1_{[0,t)} &2k(k^2A- \1_{[0,t)})^{-1} (B-B^*) & 2 (k^2A- \1_{[0,t)})^{-1} (kB-k^3B^*A)\\[0.3 cm]
\1_{[0,t)}& k^2A
&-2 (k^2A- \1_{[0,t)})^{-1} (kB^*-k^3AB)& 2k^3 (k^2A- \1_{[0,t)})^{-1} (AB-B^*A)\\[0.3 cm]
0 & 0 & \1_{[0,t)} &  \1_{[0,t)}\\[0.3 cm]
0& 0&\1_{[0,t)} & k^2A
\end{array}
\right)\\[0.3 cm]
\times
\left(
\begin{array}{l l l l}
(k^2A- \1_{[0,t)})^{-1} &  0 &0 & 0\\[0.3 cm]
0& (k^2A- \1_{[0,t)})^{-1}
&0&0\\[0.3 cm]
0 & 0 & (k^2A- \1_{[0,t)})^{-1} &  0 \\[0.3 cm]
0& 0& 0& (k^2A- \1_{[0,t)})^{-1}
\end{array}
\right).
\end{multline}
\end{proof}
Next we calculate the matrix $M_{N^{-1}}$ as in Lemma \ref{thelemma}.

\begin{proposition}
For $N$ as is Proposition \ref{Ntriboundedinvertible} we have for ${\boldsymbol\eta_1}= (\1_{[0,t)},0, 0, 0)$ and ${\boldsymbol\eta_3}= (0, 0, \1_{[0,t)},0)$, with $0<t<\infty$, $t\neq \frac{(2n-1)\pi}{2k},\quad n\in \N$, that
$$
M_{N^{-1}}=\left(\begin{array}{l l}
\langle {\boldsymbol\eta_1},N^{-1} {\boldsymbol\eta_1}\rangle & \langle \boldsymbol {\boldsymbol\eta_1}, N^{-1} {\boldsymbol\eta_3}\rangle\\
\langle {\boldsymbol\eta_3}, N^{-1} {\boldsymbol\eta_1} \rangle &\langle {\boldsymbol\eta_3},N^{-1} {\boldsymbol\eta_3}\rangle \end{array}\right)
=i\left(\begin{array}{l l} 
\frac{\tan\!\left(k\, t\right)\,}{k}
 & 0
\\
0&
\frac{\tan\!\left(k\, t\right)\,}{k}
 \end{array}\right),$$
moreover the assumptions as in Lemma \ref{thelemma} are fulfilled.
\end{proposition}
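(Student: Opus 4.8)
The plan is to read the four bilinear pairings $(\boldsymbol\eta_i, N^{-1}\boldsymbol\eta_j)$, $i,j\in\{1,3\}$, directly off the explicit inverse \eqref{Ninvcp}. Since both $\boldsymbol\eta_1=(\1_{[0,t)},0,0,0)$ and $\boldsymbol\eta_3=(0,0,\1_{[0,t)},0)$ are supported in $[0,t)$, the block-diagonal projection part of $N^{-1}$ (the $\1_{[0,t)^c}$-summand) annihilates them, so only the second summand contributes. Writing that summand as $\tfrac{1}{i}PQ$, with $Q$ the diagonal operator carrying $(k^2A-\1_{[0,t)})^{-1}$ in each block, the action on $\boldsymbol\eta_1$ and $\boldsymbol\eta_3$ simply selects the first and third columns of $P$ applied to the single function $\psi:=(k^2A-\1_{[0,t)})^{-1}\1_{[0,t)}$. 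Reading off the columns yields $N^{-1}\boldsymbol\eta_1=\tfrac{1}{i}(\psi,\psi,0,0)$, while $N^{-1}\boldsymbol\eta_3$ has third and fourth components equal to $\tfrac{1}{i}\psi$ and a first component proportional to $(k^2A-\1_{[0,t)})^{-1}(B-B^*)\psi$.

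The key computation is thus to determine $\psi$, i.e.\ to invert $k^2A-\1_{[0,t)}$ on the constant function. Here I would exploit the identification of $A$ with a Green's operator: from the defining double integral, $u:=A\psi$ solves $-u''=\psi$ on $[0,t)$ together with $u'(0)=0$ and $u(t)=0$. Substituting $\psi=k^2u-\1_{[0,t)}$ turns the equation $(k^2A-\1_{[0,t)})\psi=\1_{[0,t)}$ into the elementary boundary value problem $u''+k^2u=1$, $u'(0)=0$, $u(t)=0$, whose solution gives $\psi(s)=-\cos(ks)/\cos(kt)$. The hypothesis $t\neq(2n-1)\pi/(2k)$ is precisely what ensures $\cos(kt)\neq0$, so $\psi$ is well defined. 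Integrating, $\int_0^t\psi(s)\,ds=-\tan(kt)/k$, and hence each diagonal entry equals $\tfrac{1}{i}\int_0^t\psi\,ds=i\tan(kt)/k$, as claimed.

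For the off-diagonal entries I would argue by symmetry rather than by a second inversion. The entry $(\boldsymbol\eta_3,N^{-1}\boldsymbol\eta_1)$ vanishes at once because the third component of $N^{-1}\boldsymbol\eta_1$ is zero. The entry $(\boldsymbol\eta_1,N^{-1}\boldsymbol\eta_3)$ equals a constant times $(\1_{[0,t)},(k^2A-\1_{[0,t)})^{-1}(B-B^*)\psi)$; using that $A=B^*B$ makes $k^2A-\1_{[0,t)}$, and therefore its inverse, symmetric for the bilinear pairing, I would move the inverse onto the left factor to obtain $(\psi,(B-B^*)\psi)$. Since $(\psi,B^*\psi)=(B\psi,\psi)=(\psi,B\psi)$ by symmetry of the bilinear form, this difference is zero, so $M_{N^{-1}}$ is the stated diagonal matrix $i\,\mathrm{diag}(\tan(kt)/k,\tan(kt)/k)$.

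Finally, to confirm the hypotheses of Lemma \ref{thelemma}, I observe that $M_{N^{-1}}$ is purely imaginary, whence $\Re(M_{N^{-1}})=0$, while $\Im(M_{N^{-1}})=\tfrac{\tan(kt)}{k}\,\mathbf{Id}$ is a nonzero matrix for admissible $t$; this places us in the second alternative of the lemma. Boundedness and bounded invertibility of $N$, which also underlie the relevant determinant condition, are already supplied by Proposition \ref{Ntriboundedinvertible}. The main obstacle I anticipate is the bookkeeping of the first two steps: correctly collapsing the $4\times4$ block expression \eqref{Ninvcp} to the scalar problem for $\psi$, tracking the prefactor $1/i$, and justifying the boundary conditions $u'(0)=0$, $u(t)=0$ for the Green's operator $A$. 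Once $\psi(s)=-\cos(ks)/\cos(kt)$ is in hand, everything else is routine integration and the symmetry observation above.
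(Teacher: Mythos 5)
Your proof is correct, but it takes a genuinely different route from the paper's. The paper does not use the explicit inverse \eqref{Ninvcp} at all in this argument: it looks for preimages, i.e.\ it solves $N\mathbf{f}=\boldsymbol\eta_1$ and $N\mathbf{h}=\boldsymbol\eta_3$ by converting these equations into a coupled system of four differential equations with boundary conditions read off from the definitions of $A$, $B$, $B^*$, solves that system symbolically (with MATLAB's dsolve routine), and then integrates the resulting trigonometric expressions to obtain the four pairings. You instead read $N^{-1}\boldsymbol\eta_1$ and $N^{-1}\boldsymbol\eta_3$ off \eqref{Ninvcp}, collapse everything to the single scalar resolvent $\psi=(k^2A-\1_{[0,t)})^{-1}\1_{[0,t)}$, compute $\psi$ by the elementary boundary value problem $u''+k^2u=1$, $u'(0)=0$, $u(t)=0$ (the same ODE idea as the paper, but scalar and by hand), and kill the off-diagonal entries by the symmetry of $A$, hence of $(k^2A-\1_{[0,t)})^{-1}$, and the antisymmetry of $B-B^*$ under the bilinear pairing, rather than by explicit computation. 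Your route is hand-checkable, avoids computer algebra, and makes the structural reason for the vanishing of the off-diagonal entries transparent; its cost is that it leans on \eqref{Ninvcp}, whose derivation in the paper is itself only sketched --- indeed the $(1,3)$ entry of \eqref{Ninvcp} appears to differ by a sign from what a direct solve of $N\mathbf{h}=\boldsymbol\eta_3$ yields --- but your argument is robust to this, since the entries you use quantitatively (the $\1_{[0,t)}$ blocks in columns $1$ and $3$) are immediate to verify, and the suspect entry enters only through a pairing that vanishes for either sign. Two small caveats: the determinant condition of Lemma \ref{thelemma} is not supplied by Proposition \ref{Ntriboundedinvertible}, as you suggest, but by the separate eigenvalue computation in Proposition \ref{eigen_structure_of_Id_L_K} (the paper, like you, defers it); and for $t=n\pi/k$ one has $\tan(kt)=0$, so $\Im(M_{N^{-1}})=0$ and neither alternative of Lemma \ref{thelemma} holds, a case not excluded by the hypothesis $t\neq(2n-1)\pi/(2k)$ --- this gap is shared by the paper's own statement, so it is not counted against you.
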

\begin{proof}
We have ${\boldsymbol\eta_1}=(\1_{[0,t)},0, 0, 0)$ and ${\boldsymbol\eta_3}= (0, 0, \1_{[0,t)},0)$. Hence, by Proposition \ref{Ntriboundedinvertible}, we just have to consider the restriction of $N^{-1}$ to $ L^2_4([0,t))_{\C}$.\\
Instead of calculating the inverse directly we find a preimage of ${\boldsymbol\eta_1}$ and ${\boldsymbol\eta_3}$, respectively, under the operator $N$. 
We have 
$$\left(
\begin{array}{l l l l}
k^2 A &-\1_{[0,t)} & 0 & -2k B^*\\
-\1_{[0,t)} & \1_{[0,t)} & 2k B & 0 \\
0 & 0 & k^2 A & -\1_{[0,t)} \\
0& 0& -\1_{[0,t)}& \1_{[0,t)}
\end{array}\right) \left(\begin{array}{l} f_1\\f_2\\f_3\\f_4 \end{array}\right)=-i {\boldsymbol\eta_k}\quad k=1,3.
$$
We can transfer this to a system of differential equations, note that the function on the right-hand-side is almost surely constant. We obtain
\begin{align}
(I)&\quad -k^2f_1 +2k f_4'=f_2''\\
(II)&\quad f_1'-2kf_3=f_2'\\
(III)&\quad -k^2 f_3=f_4''\\
(IV)&\quad f_3=f_4.
\end{align}
Taking into account that $f_3=f_4$ and deriving equation $(II)$ and setting it equal to $(I)$ we obtain.
\begin{align}
(I)&\quad -k^2f_1 +2k f_4'=f_1''-2kf_3'\\
(II)&\quad f_1'-2kf_3=f_2'\\
(III)&\quad -k^2 f_3=f_3''\\
(IV)&\quad f_3=f_4.
\end{align}
Then $(I)$ can be written as
$$f_1'' = -k^2f_1 +4kf_3'.$$
To obtain now the preimages we have to take the boundary conditions into account. 
We have by the definition of $B$ and $B^*$ and taking into account that ${\boldsymbol\eta_k}_2={\boldsymbol\eta_k}_4=0$ the following boundary conditions:
$$ f_1(0) =f_2(0)$$
$$ f_2'(0)= f_3(0)$$
$$f_4'(0)=f_3'(0)=0.$$
The additionally two boundary conditions are obtained by inserting ${\boldsymbol\eta_k}$.
For ${\boldsymbol\eta_1}$ we have
$$f_2(t)=i,$$
$$f_4(t)=f_3(t)=0.$$ 
For ${\boldsymbol\eta_3}$ we have
$$f_2(t)=0,$$
$$f_4(t)=f_3(t)=i.$$ 
We solved this system of differential equations with the dsolve-routine in MATLAB and obtained
$$N{\bf f}=(\1_{[0,t)},0,0,0)$$ with 
$${\bf f} =\left(\begin{array}{l}
\frac{\cos\!\left(k\, s\right)\, \mathrm{i}}{\cos\!\left(k\, t\right)} \\
\frac{\cos\!\left(k\, s\right)\, \mathrm{i}}{\cos\!\left(k\, t\right)}\\
0\\
0
\end{array}\right),$$
which is well-defined due to our restrictions on $t$ and 
$$N{\bf h}=(0,0,\1_{[0,t)},0)$$ with
$${\bf h} =\left(\begin{array}{l}
\frac{2\, \sin\!\left(k\, s\right)\, \mathrm{i} + 2\, k\, s\, \cos\!\left(k\, s\right)\, \mathrm{i} - 2\, k\, t\, \cos\!\left(k\, s\right)\, \mathrm{i}}{\cos\!\left(k\, t\right)} \\
\frac{2\, k\, \cos\!\left(k\, s\right)\, \left(s - t\right)\, \mathrm{i}}{\cos\!\left(k\, t\right)}
\\
\frac{\cos\!\left(k\, s\right)\, \mathrm{i}}{\cos\!\left(k\, t\right)}
\\
\frac{\cos\!\left(k\, s\right)\, \mathrm{i}}{\cos\!\left(k\, t\right)}
\end{array}\right),$$
which again exists due to our restrictions on $t$.\\[0.5 cm] 

Then we have
$$
M_{N^{-1}}=\left(\begin{array}{l l}
\langle {\boldsymbol\eta_1}, {\bf f}\rangle & \langle {\boldsymbol\eta_1}, {\bf h}\rangle\\
\langle {\boldsymbol\eta_3},  {\bf f}\rangle &\langle {\boldsymbol\eta_3},  {\bf h}\rangle \end{array}\right)
=i\left(\begin{array}{l l} 
\frac{\tan\!\left(k\, t\right)\,}{k}
 & 0
\\
0&
\frac{\tan\!\left(k\, t\right)\,}{k}
 \end{array}\right).$$
\end{proof}

Now, to have all ingredients for the integrand, we calculate the determinant of
\begin{multline*}
(Id +L(I+K)^{-1}) =\left(\begin{array}{l l l l} \1_{[0,t)^c} & 0 & 0 & 0 \\
0&\1_{[0,t)^c} & 0 & 0 \\
0&0&\1_{[0,t)^c} &  0 \\
0&0&0&\1_{[0,t)^c} \end{array} \right)\\
+
\left(\begin{array}{l l l l} \1_{[0,t)}-k^2A & -k^2A &2kB^* &0\\
0&\1_{[0,t)} & -2kB & -2kB \\
0&0&\1_{[0,t)}-k^2A &  -k^2A \\
0&0&0&\1_{[0,t)}\end{array} \right)
\end{multline*}


\begin{proposition}\label{eigen_structure_of_Id_L_K}
Let $0<t<\infty$, $t\neq \frac{(2n-1)\pi}{2k},\quad n\in \N$. Then the eigenvalues of ${Id +L(Id+K)^{-1}}_{\mid{L_4^2([0,t))_{\C}}}$ are $v_{0}=1$ and
\begin{equation*}
v_n = 1-k^2  \bigg(\frac{t}{(n-\frac{1}{2})\pi}\bigg)^2, \quad n =1,2,3\cdots.
\end{equation*}
The algebraic multiplicity of each $v_n$, $n=1,2,3,\cdots$ is 2.\\\\
The eigenfunctions to $v_{0}$ have the form
$$\left(\begin{array}{l} f_1\\f_2\\f_3\\-f_3 \end{array}\right),$$
where $f_2,f_3$ are arbitrarily chosen in $L^2([0,t))_{\C}$ and $f_1$ solves the equation
$$k^2Af_1=-k^2Af_2+2kB^*f_3.$$\\
The eigenfunctions to $v_n,\ n=1,2,3,\cdots$ is the set
$$\left\{\alpha\left(\begin{array}{c} \cos(\frac{k}{\sqrt{1-v_n}}s)\\0\\0\\0 \end{array}\right) + \beta\left(\begin{array}{c} \frac{2k}{1-v_n} s \cos(\frac{k}{\sqrt{1-v_n}}s)\\ \frac{2}{\sqrt{1-v_n}}\sin(\frac{k}{\sqrt{1-v_n}}s)\\ \cos(\frac{k}{\sqrt{1-v_n}}s)\\0\end{array}\right) \quad \Bigg|\ \alpha,\beta\in\C\right\}.$$
\\
Hence 
\begin{multline*}
\det(Id +L(I+K)^{-1}) = det(Id-k^2 A)^2 = \left(\prod_{n=1}^{\infty} 1-k^2(\frac{t^2}{(n-\halb)^2\pi^2}\right)^2 = \cos(kt)^2.
\end{multline*}
\end{proposition}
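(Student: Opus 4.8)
The plan is to exploit the block upper triangular form of the restricted operator. Since $N$, and hence $Id+L(Id+K)^{-1}$, acts as the identity on $L^2_4([0,t)^c)_{\C}$, I would first reduce to $L^2_4([0,t))_{\C}$, where the operator is the $4\times4$ block matrix whose strict lower triangle vanishes and whose diagonal blocks are $\1_{[0,t)}-k^2A$, $\1_{[0,t)}$, $\1_{[0,t)}-k^2A$ and $\1_{[0,t)}$. Because $k^2A$ is trace class while the off-diagonal couplings never enter the diagonal, the spectrum is governed by these diagonal blocks: the two identity blocks produce the eigenvalue $v_0=1$, and the two copies of $\1_{[0,t)}-k^2A$ produce the eigenvalues $1-k^2\lambda$ with $\lambda$ ranging over the spectrum of $A$.

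The key step is the spectral analysis of $A$ on $L^2([0,t))_{\C}$. Writing $g=Af$ and using $g'(s)=-\int_0^s f(\tau)\,d\tau$ and $g''(s)=-f(s)$, together with the built-in data $g(t)=0$ and $g'(0)=0$, the eigenvalue equation $Af=\lambda f$ becomes the boundary value problem $\lambda f''=-f$, $f'(0)=0$, $f(t)=0$. Its solutions are $f_n(s)=\cos\!\big(\tfrac{(n-\frac12)\pi}{t}\,s\big)$ with $\lambda_n=\big(\tfrac{t}{(n-\frac12)\pi}\big)^2$, which gives the claimed eigenvalues $v_n=1-k^2\lambda_n$. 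The excluded values $t=\frac{(2n-1)\pi}{2k}$ are precisely the zeros of $\cos(kt)$, where the determinant below vanishes.

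To obtain the eigenfunctions and multiplicities I would solve the homogeneous system $(Id+L(Id+K)^{-1}-v_n){\bf f}=0$ from the bottom row upward. For $n\geq1$ the fourth row forces $f_4=0$; the third row then makes $f_3$ an eigenfunction of $A$, hence a multiple of $\cos(\tfrac{k}{\sqrt{1-v_n}}s)$; the second row fixes $f_2$ via $B$; and the first row, whose homogeneous part again has kernel spanned by $\cos(\tfrac{k}{\sqrt{1-v_n}}s)$, determines $f_1$ up to that one remaining degree of freedom. This produces exactly the displayed two-parameter family in $\alpha,\beta$, so the geometric multiplicity is $2$; since for a block triangular operator the algebraic multiplicity is the sum of the multiplicities contributed by the diagonal blocks, and each of the two $\1_{[0,t)}-k^2A$ blocks contributes $v_n$ simply, it equals $2$ as well. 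The case $v_0=1$ is similar: invertibility of $A$ on $L^2([0,t))_{\C}$ forces $f_4=-f_3$ while leaving $f_2,f_3$ free and determining $f_1$, recovering the stated eigenspace.

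Finally, I would read off the determinant as the product of the eigenvalues with their multiplicities: the $v_0=1$ block contributes the factor $1$, while the two $\1_{[0,t)}-k^2A$ blocks each contribute $\prod_{n=1}^{\infty}(1-k^2\lambda_n)$. Using the Euler product $\cos z=\prod_{n=1}^{\infty}\big(1-\tfrac{z^2}{(n-\frac12)^2\pi^2}\big)$ evaluated at $z=kt$ gives $\det(Id-k^2A)=\cos(kt)$ and therefore $\det(Id+L(Id+K)^{-1})=\cos(kt)^2$. I expect the main obstacle to be the first-row verification for the $\beta$-component of the $v_n$-eigenfunctions, where one must apply $A$, $B$ and $B^*$ to $s\cos(\tfrac{k}{\sqrt{1-v_n}}s)$ and $\sin(\tfrac{k}{\sqrt{1-v_n}}s)$ and check the cancellations explicitly, together with the bookkeeping required to rule out a Jordan block (so that the algebraic multiplicity is exactly $2$) and to confirm that the infinite-dimensional $v_0=1$ eigenspace contributes only the trivial factor to the determinant.
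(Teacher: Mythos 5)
Your proposal is correct and takes essentially the same route as the paper's proof: restrict to $L^2_4([0,t))_{\C}$, solve the eigenvalue system row by row from the bottom of the upper triangular block matrix, reduce the nontrivial eigenvalues to those of $\1_{[0,t)}-k^2A$, and evaluate the determinant through the product $\cos(kt)=\prod_{n=1}^{\infty}\left(1-\frac{(kt)^2}{(n-\frac{1}{2})^2\pi^2}\right)$. The only differences are cosmetic: you make the spectral analysis of $A$ (the boundary value problem $\lambda f''=-f$, $f'(0)=0$, $f(t)=0$) explicit where the paper takes it as known, and you handle the multiplicity bookkeeping via block-triangularity where the paper instead notes that $L(Id+K)^{-1}$ is Hilbert--Schmidt, so that eigenvalues different from $1$ have finite algebraic multiplicity.
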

\begin{proof}
First note that we need the restriction on $t$ for the well-definiteness of $(Id+K)^{-1}$. We consider the equation
\begin{equation}\label{eqeigenfunc}
(Id +L(Id+K)^{-1})\ \left(\begin{array}{l} f_1\\f_2\\f_3\\f_4 \end{array}\right)=\lambda\left(\begin{array}{l} f_1\\f_2\\f_3\\f_4 \end{array}\right),\ \lambda\in\C,\ f_1,\ f_2,\ f_3,\ f_4\ \in L^2([0,t))_{\C}\end{equation}
Note that $Id +L(Id+K)^{-1}=N(Id+K)^{-1}$ is invertible because N is invertible. Hence 0 is not an eigenvalue of $Id +L(Id+K)^{-1}$. Furthermore $L(Id+K)^{-1}$ is a Hilbert-Schmidt operator, where each eigenvalue different from zero has a finite algebraic multiplicity. Consequently each eigenvalue of $Id +L(Id+K)^{-1}$ which is different from 1 has a finite algebraic multiplicity. Since we do not know whether $L(Id+K)^{-1}$ is a trace class operator, it is not sure that the determinant of $Id +L(Id+K)^{-1}$ has a finite value. We first calculate the eigenfunctions.\\
\underline{Case 1: $\lambda=1$.}\\
The forth component in \eqref{eqeigenfunc} for $\lambda=1$ is fulfilled for all $f_4$, hence we choose $f_4$ arbitrarily. The third component gives $(1-k^2A)f_3-k^2Af_4=f_3$, hence $f_3=-f4$. Using this result in the second component we get $f_2=f_2$, hence $f_2$ can also be chosen arbitrarily. The first equation leads to
$$k^2Af_1=-k^2Af_2+2kB^*f_3$$
as condition for $f_1$. The eigenvectors of $Id +L(Id+K)^{-1}_{|{L^2([0,t))_{\C}}}$ corresponding to the eigenvalue 1 are given by $\left(\begin{array}{c} f_1\\f_2\\ \frac{k}{2}B(f_1+f_2)\\ -\frac{k}{2}B(f_1+f_2) \end{array}\right)$,
with $f_1,\ f_2\in L^2([0,t))_{\C}$.
\\
\underline{Case 2: $\lambda\neq 1$.}\\
The 4th component in \eqref{eqeigenfunc} implies $f_4=0$.
Assuming $f_3=0$ implies $f_2=0$ and $f_1$ is an eigenvector of $1-k^2A$ corresponding to $\lambda$, hence $\lambda=v_n$ for some $n>0$.\\
Next we assume $f_3\neq 0$. Then $f_3$ is an eigenvector of $1-k^2A$ corresponding to $\lambda$, hence $\lambda=v_n$ for some $n>0$. Furthermore $f_2=\frac{2k}{1-\lambda}Bf_3$ by the second component and $$f_1\in\left\{(1-\lambda-k^2A)^{-1}\left((\frac{2k^3}{1-\lambda} AB-2kB^*)f_3\right)\, \Big| \, f_3 \in \text{span}\left\{\cos(\frac{k}{\sqrt{1-v_n}}s)\right\} \right\}$$ by the first component. Note that the set $$\left\{(1-\lambda-k^2A)^{-1}\left((\frac{2k^3}{1-\lambda} AB-2kB^*)f_3\right)\, \Big| \,f_3 \in \text{span}\left\{\cos(\frac{k}{\sqrt{1-v_n}}s)\right\} \right\}$$ is not empty, since we have  
\begin{multline*}
\left\{(1-\lambda-k^2 A)^{-1}\left( \frac{2k^3}{1-\lambda}AB-2kB^* \right)f_3\right\}\\
= \frac{2k}{1-\lambda}\cdot \left(s\ f_3 \right)+\ker((1-\lambda)-k^2A).\end{multline*}
So the eigenspace of $Id +L(Id+K)^{-1}\mid_{L_4^2([0,t))_{\C}}$ corresponding to $v_n$ is the set
$$\left\{\alpha\left(\begin{array}{l} f_1\\0\\0\\0 \end{array}\right) + \beta\left(\begin{array}{c} \frac{2k}{1-v_n}\cdot s\ f_3\\ \frac{2k}{1-v_n}Bf_3\\f_3\\0\end{array}\right) \Bigg|\ f_1, f_3 \text{ eigenvectors of }1 - k^2A\ \text{ to } v_n,\ \alpha,\beta\in\C\right\}.$$
Thus we obtain
$$\det(Id +L(Id+K)^{-1}_{|{L_4^2([0,t))_{\C}}})=\prod\limits_{n=1}^{\infty}\left(1-k^2  \bigg(\frac{t}{(n-\frac{1}{2})\pi}\bigg)^2\right)\left(1-k^2  \bigg(\frac{t}{(n-\frac{1}{2})\pi}\bigg)^2\right).$$
Since $L$ leaves $L_4^2([0,t))_{\C}$ invariant and acts trivial on the complement, we have that the eigenvalues of $Id +L(Id+K)^{-1}_{|{L_4^2([0,t)^c)_{\C}}}$ are one. Thus the determinant of the operator on this subspace equals one. Therefore we have:
\begin{multline*}
\det(Id +L(I+K)^{-1}) = \det(Id +L(Id+K)^{-1}_{|{L_4^2([0,t))_{\C}}})\\
= \det(Id-k^2 A)^2 = \left(\prod_{n=1}^{\infty} 1-k^2(\frac{t^2}{(n-\halb)^2\pi^2}\right)^2 = \cos(kt)^2.
\end{multline*}
\end{proof}


%
Thus we can state the following theorem.
\begin{theorem}\label{magnetictheorem}
Let $y_1, y_2\in \R$, $0<t<\infty$, $t\neq \frac{(2n-1)\pi}{2k},\quad n\in \N$, then the Feynman integrand $I_{cp}$ for the charged particle in a constant magnetic field in phase space  exists as a Hida distribution and for ${\boldsymbol \xi}\in S_4(\R)$ its generating functional is given by \begin{multline}\label{genfuncp}
T(I_{cp})({\boldsymbol \xi})
= \left(\frac{kt}{2\pi i \sin(kt)}\right) \exp\bigg(-\frac{1}{2} \big\langle{\boldsymbol \xi}, N^{-1} {\boldsymbol \xi}\big\rangle \bigg)\\
\times\exp\bigg(\frac{1}{2i} (u^T \left(\begin{array}{l l} 
\frac{kt}{\tan\!\left(k\, t\right)\,}
 & 0
\\
0&
\frac{kt}{\tan\!\left(k\, t\right)\,}
 \end{array}\right) u)\bigg),\\
\text{ with } u= \left( \begin{array}{l}
iy_1 +\halb \langle{\boldsymbol\eta}_1,N^{-1}{\boldsymbol \xi}\rangle \big)+\halb \langle N^{-1}{\boldsymbol\eta}_1,{\boldsymbol \xi}\rangle \\ \dots\\ iy_2 +\halb\langle{\boldsymbol\eta}_3,N^{-1}{\boldsymbol \xi}\rangle+\halb \langle N^{-1}{\boldsymbol\eta}_3,{\boldsymbol \xi}\rangle  \end{array}\right).
\end{multline} 
Moreover its generalized expectation
\begin{equation*}
\mathbb{E}(I_{cp})=T(I_{cp})(0)=\left(\frac{k}{2\pi i \sin(kt)}\right) \exp\left( i \frac{k}{2\tan(k t)} (y_1^2+y_2^2)\right)
\end{equation*}
is the Greens function to the Schrö\-dinger equation for the charged particle in a constant magnetic field, compare e.g.~with \cite{KL85}.   
\end{theorem}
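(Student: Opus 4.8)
The plan is to recognise $I_{cp}$ as a particular instance of the generalized Gauss kernel $\Phi_{\mathbf{K},\mathbf{L}}$ treated in Lemma \ref{thelemma}, and then to read off both the generating functional and the generalized expectation directly from the closed formula \eqref{magicformula}. Comparing the expression for $I_{cp}$ given at the beginning of this section with the definition of $\Phi_{\mathbf{K},\mathbf{L}}$, one sees that $I_{cp}$ is exactly of that form: the operator matrices are the $K$ and $L$ introduced above, the linear term is absent so that $\mathbf{g}=0$, there are $J=2$ Donsker deltas, and the pinning functions are $\boldsymbol\eta_1=(\1_{[0,t)},0,0,0)$ and $\boldsymbol\eta_3=(0,0,\1_{[0,t)},0)$, anchoring the two planar coordinates at $y_1$ and $y_2$. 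Hence the whole task splits into (i) verifying that the hypotheses of Lemma \ref{thelemma} hold for this choice, and (ii) substituting the quantities already computed in the preceding propositions into \eqref{magicformula}.

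For step (i) I would assemble the ingredients as follows. The off-diagonal entries $A$, $B$ and $B^*$ of $L$ are bounded on $L^2(\R)_\C$ (indeed $A$ is trace class and $B,B^*$ are compact), so $L$ is a block operator matrix with bounded entries as required. That $Id+K$ and $N=Id+K+L$ are bounded with bounded inverse is precisely the content of Proposition \ref{Ntriboundedinvertible}, where the explicit inverse \eqref{Ninvcp} is recorded as well; the restriction $t\neq\frac{(2n-1)\pi}{2k}$ is what guarantees well-definedness of $(Id+K)^{-1}$. Proposition \ref{eigen_structure_of_Id_L_K} supplies $\det(Id+L(Id+K)^{-1})=\cos(kt)^2$, which is nonzero under the same restriction, so the determinant appearing in \eqref{magicformula} exists and does not vanish. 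Finally, the computation of $M_{N^{-1}}$ above gives $M_{N^{-1}}=i\,\frac{\tan(kt)}{k}\,Id$, a purely imaginary matrix, so that $\Re(M_{N^{-1}})=0$ while $\Im(M_{N^{-1}})=\frac{\tan(kt)}{k}Id$; the second alternative in the hypothesis of Lemma \ref{thelemma} then holds provided $\tan(kt)\neq0$, which ties the admissible $t$ to the zeros of the trigonometric functions occurring in the answer. With these assumptions in place, Lemma \ref{thelemma} yields $I_{cp}\in(S)'$.

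For step (ii) I would insert the computed values into \eqref{magicformula}. The scalar prefactor is the product of $\big((2\pi)^2\det(M_{N^{-1}})\big)^{-1/2}$ and $\det(Id+L(Id+K)^{-1})^{-1/2}$; using $\det(M_{N^{-1}})=-\tan^2(kt)/k^2$ together with $\det(Id+L(Id+K)^{-1})=\cos^2(kt)$, and fixing the branch of the square root, this collapses via the identity $\tan(kt)\cos(kt)=\sin(kt)$ to $\frac{k}{2\pi i\sin(kt)}$. The first exponential is $\exp(-\halb\langle\boldsymbol\xi,N^{-1}\boldsymbol\xi\rangle)$ verbatim, while the second is governed by $(M_{N^{-1}})^{-1}=\frac{k}{i\tan(kt)}Id$ and the vector $u$. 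Setting $\boldsymbol\xi=0$ kills the cross terms in $u$, so $u=(iy_1,iy_2)$ and the quadratic form produces $\exp\!\big(\frac{ik}{2\tan(kt)}(y_1^2+y_2^2)\big)$, which is the asserted generalized expectation and which one identifies with the known propagator of \cite{KL85}.

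The genuinely hard analysis — producing the explicit inverse $N^{-1}$, evaluating $M_{N^{-1}}$ by solving the associated boundary value problem, and determining the spectrum and determinant of $Id+L(Id+K)^{-1}$ — has already been carried out in Propositions \ref{Ntriboundedinvertible}--\ref{eigen_structure_of_Id_L_K}, so the main obstacle that remains is bookkeeping rather than estimation. Two conventions require care. First, the choice of the square-root branch is what supplies the factor $i$ in the denominator and must be made consistently across both determinants. Second, because $N^{-1}$ is not self-adjoint, the cross term between $\boldsymbol\xi$ and each $\boldsymbol\eta_k$ enters $u$ only in its symmetrized form $\halb\langle\boldsymbol\eta_k,N^{-1}\boldsymbol\xi\rangle+\halb\langle N^{-1}\boldsymbol\eta_k,\boldsymbol\xi\rangle$, exactly as written in the definition of $u$ in \eqref{genfuncp}; this symmetrization is forced by the fact that only the symmetric part of $N^{-1}$ contributes to the Gaussian quadratic form. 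Tracking these two conventions faithfully is precisely what turns the generic formula \eqref{magicformula} into the explicit expression claimed in the theorem.
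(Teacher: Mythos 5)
Your proposal is correct and is essentially the paper's own argument: the paper gives no separate proof of Theorem \ref{magnetictheorem}, but obtains it exactly as you do, by checking the hypotheses of Lemma \ref{thelemma} via Proposition \ref{Ntriboundedinvertible}, the computation $M_{N^{-1}}=i\,\frac{\tan(kt)}{k}\,\mathrm{Id}$, and Proposition \ref{eigen_structure_of_Id_L_K}, and then substituting into \eqref{magicformula} with ${\bf g}=0$, $J=2$ and the pinning functions ${\boldsymbol\eta}_1,{\boldsymbol\eta}_3$, so that $\big((2\pi)^2\det(M_{N^{-1}})\big)^{-1/2}\big(\det(Id+L(Id+K)^{-1})\big)^{-1/2}=\frac{k}{2\pi i\sin(kt)}$. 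Your additional remarks --- that the purely imaginary $M_{N^{-1}}$ places one in the second alternative of the lemma and therefore also requires $\tan(kt)\neq 0$, and that the cross terms in $u$ enter in symmetrized form because only the symmetric part of $N^{-1}$ contributes to the quadratic form --- are consistent with, and slightly more explicit than, what the paper leaves implicit.
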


\end{document}